\def\>{\rangle}
\def\<{\langle}
\DeclareMathOperator*{\Id}{\mathds{1}}
\newcolumntype{C}[1]{>{\centering\arraybackslash}p{#1}}
\newcommand\numberthis{\addtocounter{equation}{1}\tag{\theequation}}
\begin{document}

\title{Complexity phase transitions in instantaneous quantum polynomial-time circuits}

\author{Chae-Yeun Park}
\affiliation{Institute for Theoretical Physics, University of Cologne, 50937 K{\"o}ln, Germany}
\affiliation{Xanadu, Toronto, ON, M5G 2C8, Canada}

\author{Michael J. Kastoryano}
\affiliation{Institute for Theoretical Physics, University of Cologne, 50937 K{\"o}ln, Germany}
\affiliation{Amazon Quantum Solutions Lab, Seattle, Washington 98170, USA}
\affiliation{AWS Center for Quantum Computing, Pasadena, California 91125, USA}

\newtheorem{theorem}{Theorem}
\newtheorem{proposition}{Proposition}
\newtheorem{observation}{Observation}
\newtheorem{corollary}{Corollary}
\newtheorem{lemma}{Lemma}
\newtheorem{conjecture}{Conjecture}

\theoremstyle{definition}
\newtheorem{example}{Example}
\newtheorem{definition}{Definition}

\date{\today}

\begin{abstract}
We study a subclass of the Instantaneous Quantum Polynomial-time (IQP) circuit with a varying density of two-qubit gates.
In addition to a known anticoncentration regime,
we identify novel parameter conditions where the model is classically simulable or the output distribution follows the Porter-Thomas distribution.
By showing that those parameter regimes do not coincide, we argue the presence of more than two phases in the model. 
The learnability of the output distribution of this model is further studied, which indicates that an energy-based model fails to learn the output distribution even when it is not anticoncentrated.
Our study reveals that a quantum circuit model can have multiple fine-grained complexity phases, suggesting the potential for quantum advantage even when the output distribution is far from the Porter-Thomas distribution.
\end{abstract}
\maketitle

It is widely believed that classical computers require an exponential amount of resources to simulate a general quantum system.
Computational complexity theory provides a rigorous mathematical foundation of this belief 
by proving that the output distribution of a certain subclass of quantum system is classically intractable. 
Such examples include BosonSampling~\cite{aaronson2011computational}, instantaneous quantum polynomial-time (IQP) circuits~\cite{bremner2011classical,bremner2016average} as well as random circuits with one and two qubit gates~\cite{boixo2018characterizing,bouland2019complexity} on which Google's recent ``quantum advantage'' claim~\cite{arute2019quantum} is based.
On the other hand, when the unitary operators allowed in the system are insufficient for universal quantum computation, one can often devise an efficient classical algorithm for this task.
Such examples include free fermions~\cite{lieb1961two}, matchgates~\cite{knill2001fermionic,valiant2002quantum,terhal2002classical}, and Clifford~\cite{gottesman1998heisenberg,aaronson2004improved} circuits.
Then a natural question arises: ``Where is the boundary between classically simulable and intractable quantum systems?"

For Clifford circuits, it is known that adding any non-zero density of $T$-gates induces a transition to a classically intractable circuit~\cite{zhou2020single,haferkamp2020quantum,bravyi2016improved,leone2021quantum}.
However, only a partial answer is available for other models.
Several studies~\cite{deshpande2018dynamical,oh2022classical,maskara2022complexity} showed that classically easy and difficult phases exist in dynamics of interacting bosons, but those results are often limited to the worst-case instances.
On the other hand, Ref.~\cite{tangpanitanon2019quantum,tangpanitanon2023signatures} claimed that a transition between thermalizing and many-body localized phases in interacting spins with a periodic driving is related to  sampling complexity.
Albeit their numerical results demonstrate that the output distribution follows the Porter-Thomas distribution in the thermalizing phase, 
their complexity arguments rely on a strong assumption that circular unitary ensemble is classically intractable.

In this work, we study complexity phase transitions in IQP circuits using the density of two-qubit gates as a parameter.
Compared to other models, rigorous arguments on the classical hardness of average instances have been established for the IQP circuits~\cite{bremner2011classical,bremner2016average,bremner2017achieving}.
In addition to a known parameter condition for anticoncentration~\cite{bremner2017achieving}, we find parameter regimes where the model is classically simulable or the output distribution follows the Porter-Thomas distribution.
By showing that those parameters regimes do not coincide, we suggest the model has more than two distinct phases.

We further explore the classical learnability of the output distribution.
We first show that the output distribution of the worst-case instance is not PAC-learnable for circuits with $N^{\Omega(1)}$ two-qubit gates.
However, this theoretical tool is too limited when studying typical instances.
We address this problem by introducing a tractability measure based on the classical parent Hamiltonian whose Gibbs state represents the distribution.
Our measure is closely related to the energy-based model~\cite{lecun2006tutorial} that has been widely used for modeling the distribution.
By computing the tractability measure as well as training a neural network, we argue that classical learnability also undergoes a transition but earlier than the transition to the anticoncentration or to the Porter-Thomas distribution.
This further supports that the model has several complexity phases.

\textit{Instantaneous quantum polynomial-time circuit in the Erd\H{o}s-R\'{e}nyi graph}.--
Throughout the paper, we consider the output distribution $p(x) = |\braket{x|\psi}|^2$ of a quantum state generated by an IQP-type circuit $\ket{\psi} = H^{\otimes N} U_{\rm diag} \ket{+}^{\otimes N}$ where $U_{\rm diag}$ is a diagonal unitary gate in the computational basis, $H$ is the Hadamard gate, $N$ is the number of qubits, and $\ket{x}$ is a vector in the computational basis.
We especially use 
\begin{align}
U_{\rm diag} = F_{\theta_i, \phi_{ij} }=\exp[i \sum_i \theta_i Z_i] \exp[i \sum_{i < j} \phi_{ij} Z_i Z_j]
\end{align}
where $Z_i$ is the Pauli-Z matrix acting on the $i$-th qubit.

We always sample $\{\theta_i\}$ from the uniform distribution $\mathcal{U}_{[0, 2\pi]}$ but two body interactions $\{\phi_{i,j}\}$ are only turned on with a probability $q$, i.e., for each pair $(i,j)$, we set $\phi_{ij} = \Theta(q-\tilde{q}) \tilde{\phi}_{ij}$ where $\Theta(x)$ is the Heavyside step function, $\tilde{q}$ is a random sample from $\mathcal{U}_{[0,1]}$, and $\tilde{\phi}_{ij}$ is randomly sampled from $\mathcal{U}_{[0, 2\pi]}$.
In other words, the two-body interaction terms are chosen using the Erd\H{o}s-R\'{e}nyi random graph~\cite{erdHos1960evolution}.
Formally, the Erd\H{o}s-R\'{e}nyi $G(N,q)$ is defined for $N$ vertices, and each unique pair of vertices $(i, j)$ are connected by an edge with probability $q$.
Thus the diagonal gate contains only single-qubit gates when $q=0$, whereas it also has $2$-qubit gates when $q>0$.
As the Erd\H{o}s-R\'{e}nyi random graph shows phase transitions in its connectedness~\cite{erdHos1960evolution} and treewidth~\cite{lee2012rank},
it is natural to expect that IQP circuits over the random graph also show several phases in their properties.

The hardness proof of the sampling task typically relies on three conjectures: (C1) The polynomial hierarchy is infinite, (C2) It is \textsf{\#P-Hard} to approximate (up to a constant relative error) the output probability $p(x)$ for a constant fraction of instances, and (C3) the output distribution is anticoncentrated.
For the IQP circuit over a complete graph (i.e., $q=1$), Ref.~\cite{bremner2016average} showed that the output distribution satisfies C3, while C2 is true if approximating the partition function of the classical Ising model with complex weights defined over the same graph is \textsf{\#P-Hard} for a constant fraction of instances (which is widely believed to be true).
The result is relaxed to a sparse graph with $q=\mathcal{O}(\log N/N)$ in Ref.~\cite{bremner2017achieving}, but it requires a stronger condition that computing the partition function of the Ising model in a sparse graph is also difficult, which is less obvious.
While approximating the output probability of IQP circuits over some 2D lattices has been proven to be \textsf{\#P-Hard}~\cite{gao2017quantum,bermejo2018architectures,haferkamp2020closing}, the average-case hardness for circuits over random graphs remains as a conjecture.
Thus, the primary aim of the first part of this paper is to unveil the relationship between graph sparsity and the complexity of the sampling task with several theoretical and numerical techniques.

\textit{Sampling difficulty of the IQP circuit.}--
Our first result is that there is a classical algorithm for sampling from IQP circuits over the random graph when it is too sparse.
\begin{proposition}[Classical algorithm for sampling]
    Let $C$ be the distribution of IQP circuits over the Erd\H{o}s-R\'{e}nyi random graph with the connection probability $q < 1/N$. Then $1-\mathcal{O}(1/N)$ fraction of $C$ allows an efficient classical algorithm for sampling from the output distribution.
\end{proposition}
We briefly sketch the idea here. Full details of the algorithm can be found in Appendix~\ref{app:classical_algo_sampling}.
For a given graph $G$, our circuit has a two-qubit gate between $i$-th and $j$-th qubits iff the graph has an edge $(i,j)$.
We can interpret the circuit as a tensor network, which becomes a subgraph of the line graph of $G$.
Then the tensor network contraction algorithm~\cite{markov2008simulating} can simulate the circuit in $\mathcal{O}(N^3 \exp(\Delta(G) \mathrm{tw}(G)))$ time where $\Delta(G)$ is the maximum degree of $G$ and $\mathrm{tw}(G)$ is the treewidth of $G$.
When $q < 1/N$, the maximum degree of $G$ is less than $\mathcal{O}(\log N)$ for $1-\mathcal{O}(1/N)$ fraction of instances and the treewidth of the random graph is constant~\cite{lee2012rank}, so we have a polynomial time algorithm for computing the output distribution. One can also extend this algorithm for the sampling task using the tensor network description of the marginal distribution instead of the output state.

\begin{figure}
    \centering
    \includegraphics[width=0.90\linewidth]{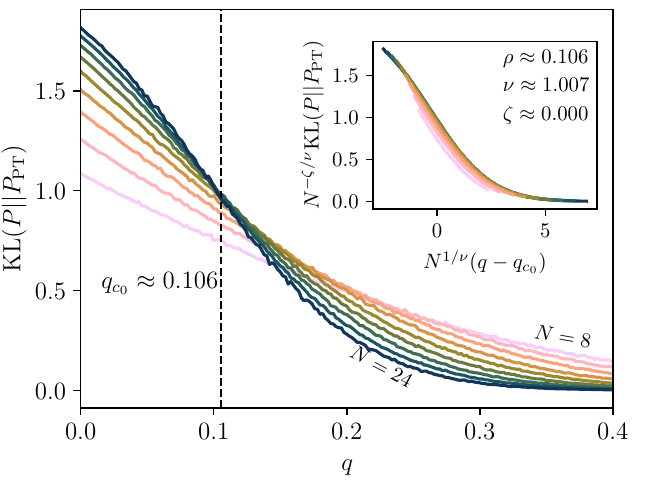}
    \caption{Phase transition of the KL divergence between the observed distribution of the output probability $p(x)$ from the IQP circuit and the Porter-Thomas distribution. Data from the system of sizes $N=8$ to $N=24$ collapses to a single curve. The transition point $p_{c} \approx 0.106$ is observed. The results are obtained from $2^{14}$ random circuit instances for $8 \leq N\leq 22$ and $2^{10}$ instances for $N=24$. For each instance of a random circuit, we randomly sampled all parameters and graph edges. As datapoints for small $N$ are subject to discretization error (as we binned the distributions), they are excluded from the data collapse analysis.}
    \label{fig:phase_transition_kl_div}
\end{figure}

We next show that the output distribution of IQP circuits over the random graph is anticoncentrated (C3).
Formally, we say that the output distribution is anticoncentrated if there exists constants $\alpha, \beta >0$ such that for any $x \in \{0,1\}^{N}$,
\begin{align}
    \mathrm{Pr}_{U} \Bigl( |\braket{x|U|0}|^2 \geq \frac{\alpha}{2^N} \Bigr) \geq \beta.
\end{align}
The following proposition shows that IQP circuits over the random graph shows this property.

\begin{proposition}[Anticoncentration]
Consider IQP circuits defined over the Erd\H{o}s-R\'{e}nyi random graph. There exists a constant $\gamma > 0$ such that the output distribution of the circuit is anticoncentrated when $q \geq \gamma \log(N)/N$.
\end{proposition}

The anticoncentration of IQP circuits over the random graph but with diagonal gates of discrete parameters is proved in Ref.~\cite{bremner2017achieving}. We show in Appendix~\ref{app:anticoncentration} that the same result also holds for continuous random parameters as in our case.

Still, anticoncentration itself is not sufficient for the difficulty of the distribution.
Thus lots of existing literature in quantum sampling advantage~\cite{boixo2018characterizing,bermejo2018architectures,bouland2019complexity,arute2019quantum,madsen2022quantum} also have considered the exact distribution of the output bit-strings.
When the output distribution of a circuit shows quantum advantage, one expects that the distribution of the output probability $p(x)$ is close to that of a random vector in the Hilbert space, which is known as the Porter-Thomas~\cite{porter1956fluctuations} distribution.
Thus distance between the output distribution of the circuit and the Porter-Thomas distribution is widely used to verify quantum advantage in the sampling task~\cite{boixo2018characterizing,bermejo2018architectures,bouland2019complexity,arute2019quantum,madsen2022quantum}.

In our model, we numerically locate the critical point $q_{c}$ where the output distribution becomes the Porter-Thomas distribution for $q > q_{c}$.
\begin{observation}
    The output distribution follows the Porter-Thomas distribution in the thermodynamic limit ($N \gg 1$) if and only if $q > q_{c}$ with $q_{c} \approx 0.106$.
\end{observation}

Our observation is based on the strong finite-size scaling behavior of the Kullback–Leibler (KL) divergence between the statistics of output probabilities $p(x)$ and the Porter-Thomas distribution, which is shown in Fig.~\ref{fig:phase_transition_kl_div}.
Precisely, all data from 8 to 24 qubits collapse to a single curve $\tilde{f}$ after rescaling, which is given by
\begin{align}
    N^{-\zeta / \nu} y_N = \tilde{f}(N^{1/\nu}(q - q_{c})),
\end{align}
where $y_N$ are obtained data $\mathrm{KL}(P || P_{PT})$ for $N$-qubit system, $\nu, \xi$ are critical exponents, and $q_{c}$ is the phase transition point. 
In the thermodynamic limit ($N \gg 1$), our collapsing result implies that there is an abrupt change of the distribution at $q=q_{c}$.

We note that, even though our observation is based on numerical results, finite-size scaling analysis~\cite{fisher1972scaling,cardy_2012} is one of the most rigorous tools for identifying phase transitions and successfully used for analyzing phases in complex models~\cite{price2012critical,luitz2015many,skinner2019measurement,lavasani2021measurement}, whose theoretical investigations are limited.
In Appendix~\ref{app:anticon_vs_pt}, we show that there is a constant $\kappa > 0$ such that, for $q=\kappa (\log N / N)$, the KL divergence converges to a constant as $N$ increases albeit the distribution is anticoncentrated.
We also provide additional evidence of the phase transitions using entanglement spectra of output quantum states in Appendix~\ref{app:ent_iqp}.

Our results so far indicate that several complexity measures do not coincide, and the model may exhibit several phase transitions.
Nonetheless, we still do not know exactly when the distribution becomes classical intractable, 
as the complexity results rely on the assumption that approximating the partition function of the classical Ising model over corresponding graphs is also difficult~\cite{bremner2016average,bremner2017achieving}.
Therefore, proving computational hardness of the Ising model yields different consequences of our results.
For example, if computing the partition function of the Ising model over a sparse graph with $q=\mathcal{O}(\log N /N)$ is proven to be classically hard, there is a circuit whose output distribution does not follow the Porter-Thomas distribution but still classically difficult to be sampled from.
So there can be a better measure detecting sampling complexity of the distribution.
On the other hand, if computing the partition function with $q=\mathcal{O}(\log N /N)$ turns out to be easy (i.e., there is a classical algorithm for this task), there is a family of the IQP circuit whose output distribution is anticoncentrated but can be classically estimated. This can be an example of non-trivial classically tractable circuit with large entanglement (such as a Clifford circuit).

\textit{Learning the output distribution.--}
We now shift our focus to a learning task, i.e., whether a classical algorithm can reproduce the output distribution of a circuit when samples are provided.
Albeit the relation between sampling difficulty and learnability is subtle,
it is often believed that learning the output distribution of a circuit is also classically hard when the sampling task is difficult, as it implies that one can mimic quantum randomness~\cite{aaronson2023certified} using a classical computer with a polynomial number of samples from a real quantum device.
Hence, the ability of a classical machine learning algorithm to learn the distribution can be another diagnostic tool to measure the intrinsic complexity of the distribution.

As in the sampling task, we use a tool from the complexity theory, the PAC learning framework, to study the basic property of the circuit.
The PAC learning theory of discrete distributions often considers two different classes of classical models named \textit{evaluator} and \textit{generator}~\cite{kearns1994learnability}. 
Given a polynomial number of samples from $p(x)$, an evaluator model learns how to evaluate the probability distribution, i.e., for any given $x'$, it can evaluate $p(x')$ up to some error.
On the other hand, a generator model can generate a sample using random numbers.
Then the efficient learnability w.r.t. those models is defined as follows.

\begin{definition}
Let $\mathcal{D}_N$ be a class of distributions over $\{0, 1\}^N$. We say that $\mathcal{D}_N$ is efficiently learnable w.r.t. evaluator (or generator) if there is an algorithm that, for any target distribution $P \in \mathcal{D}_N$ and any given $\epsilon, \delta \in [0,1]$, outputs a circuit for evaluating $Q(x)$ (or generating samples from $Q$) with probability at least $1-\delta$ in time polynomial in $1/\epsilon$, $1/\delta$ and $N$, where the distribution $Q$ satisfies $\mathrm{TV}(P,Q) \leq \epsilon$. Here, $\mathrm{TV}(P,Q) = \sum_{x \in \{0, 1\}^N } |P(x) - Q(x)|/2$ is the total variational distance.
\end{definition}

We now consider the output distribution from IQP circuits.
When $q=0$, the circuit produces a product state, and we expect that it is easy to learn the distribution, which we prove as follows.
\begin{proposition}\label{prop:learnability_product}
    The output distribution from a product state is efficiently learnable both w.r.t. evaluator and generator.
\end{proposition}

On the other hand, under a plausible assumption, it is also possible to argue that learning the output distribution of IQP circuits over a sparse graph with any non-zero $q$ is classically hard for the worst-case instance.

\begin{proposition}\label{prop:larning_hardness_lpn}
    Consider the noisy parity distribution for $k+1$ bits $P_{s, \eta}(x, y)$ defined for $s,x \in \{0,1\}^k$, $\eta \in (0,1/2)$, and $y \in \{0,1\}$, which is given by $P_{s,\eta}(x, x \cdot s) = (1-\eta)2^{-k}$ and $P_{s,\eta}(x, \neg x \cdot s) = \eta 2^{-k}$. Under the assumption that a product distribution $P_{s, \eta} \otimes \Id^{k}$ is not efficiently learnable w.r.t. an evaluator, where $\Id^{k}$ is the uniform distribution over $k$ bits, the output distribution IQP circuits with $N^{\Omega(1)}$ two-qubit gates is not efficiently learnable w.r.t. an evaluator, either.
\end{proposition}

We provide proofs of Propositions~\ref{prop:learnability_product} and \ref{prop:larning_hardness_lpn} in Appendix~\ref{app:learning_output_dist_iqp}.

However, those results tell little about learning the \textit{average}-cases of the IQP circuit. We thus instead consider a concrete model for the probability distribution known as the energy-based model~\cite{lecun2006tutorial} and ask whether such a model can learn the distribution (this type of question is often called \textit{representation dependent} hardness)~\footnote{In principle, the energy-based model itself is neither an evaluator nor a generator, as the RBM contains the spin-glass problem evaluating probability or sampling from which is generally NP-Hard~\cite{barahona1982computational}.
In practical use case, however, where the Markov chain Monte-Carlo generates an accurate sample in a polynomial time, the probability distribution can also be estimated~\cite{jerrum1993polynomial,neal2001annealed}.
Thus it is both an evaluator and generator when this condition holds.}.

\begin{figure}[t]
    \centering
    \includegraphics[width=0.9\linewidth]{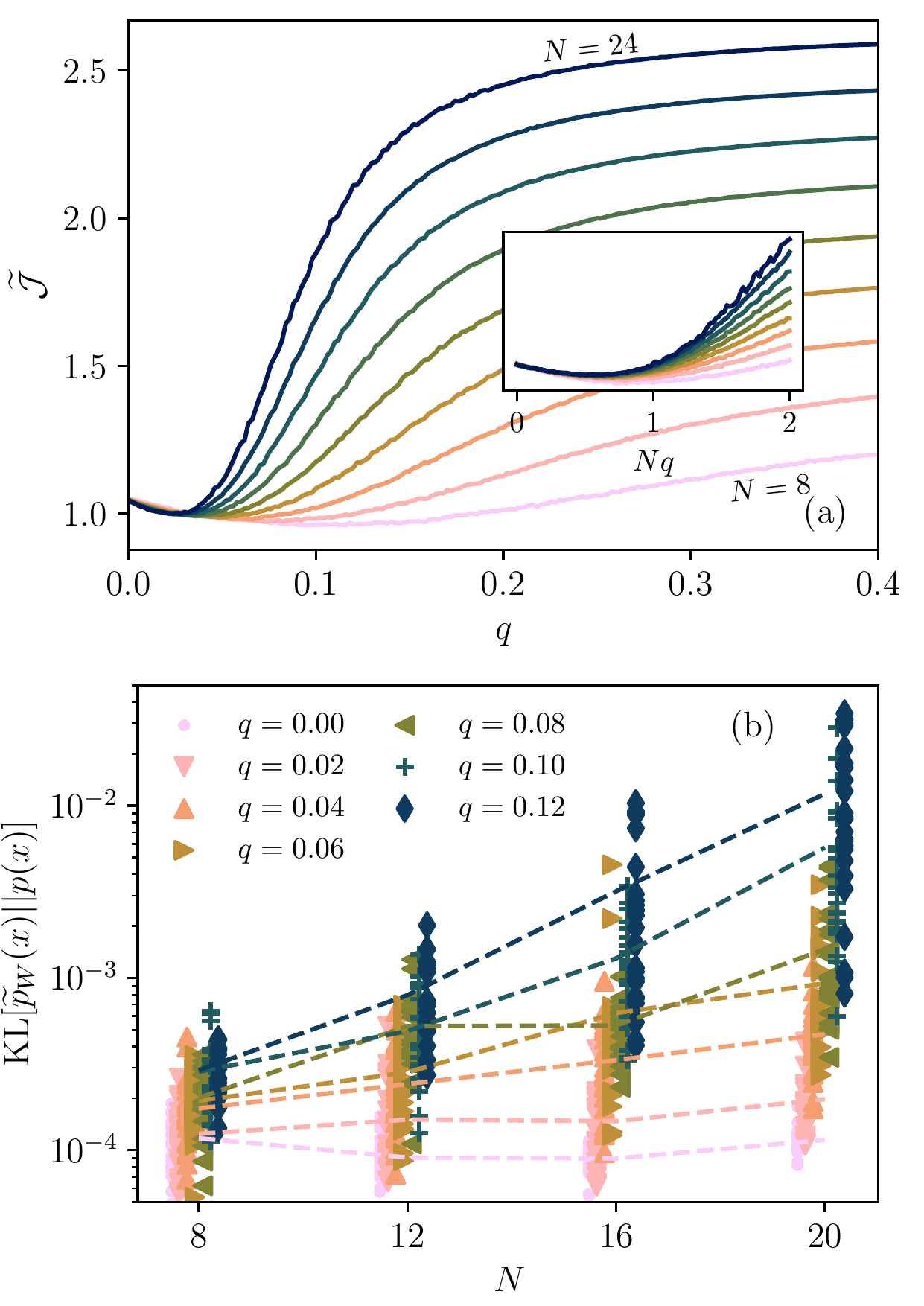}
    \caption{(a) Averaged sums of the Hamiltonian coefficients $\mathcal{J}=\sum_{k \neq 0} |J(k)|$ after normalization. (Inset) The same plot as a function of $Nq$. When $Nq \gtrsim 1.0$, $\widetilde{\mathcal{J}}=\log \mathcal{J}/\log N$ diverges with $N$, which implies a super-polynomial scaling of $\mathcal{J}$. (b) The Kullback-Leibler divergence between the output distribution of the IQP circuits $p(x)$ and that from the energy-based model after convergence $\tilde{p}_{W}(x)$. The parameter $q$ ranges in value from $q=0.0$ (the lightest color) to $q=0.12$ (the darkest color) with spacing $0.02$.
    Each data point shows the KL divergence between the output distribution from each circuit realization (randomly chosen $\{\theta_i\}$ and $\{\phi_{ij}\}$) and the trained distribution using that distribution.
    For each $q$ and $N$, we ran $24$ independent circuit realizations and show averaged results by lines.}
    \label{fig:learning_phases}
\end{figure}

When $p(x) \gneq 0$ for all $x$, one can consider a classical parent Hamiltonian which gives $p(x) \propto e^{-H(x)}$. An energy-based model learns $H(x)$ instead of $p(x)$.
As $x \in \{0,1\}^N$, one can always expand $H(x)$ as 
\begin{align}
    -H(x) &= J_0 + \sum_{i} J_i Z_i(x) + \sum_{i j} J_{ij} Z_i(x) Z_j(x) \nonumber \\
    &\qquad + \sum_{ijk}  J_{ijk} Z_i(x) Z_j(x) Z_k(x) + \cdots,
\end{align}
where the sum is over the distinct indices, and $Z_i(x) = (-1)^{x_i}$. 
This is just a change of variable $x$ to $1-2x =: z \in \{-1,1\}^N$.
Each energy-based model can learn different form of coefficients $\{J_0, J_i, J_{ij}, \cdots \}$.
For example, the Boltzmann machine~\cite{hinton2007boltzmann} (based on the classical Ising model) exactly uses the second order Hamiltonian (with $J_{i,j}$s).
Thus it can exactly capture any second order correlations of $x_i$, but not the higher order terms.
On the other hand, a single layer fully connected network with weights $w_i$ and $x^3$ activation function gives $-H(x) = (\sum_i w_i x_i)^3$, which can approximate some third order correlations.
However, any of those models cannot exactly express all high-order terms unless exponentially many parameters are provided, as the number of coefficients ($J$s) is exponentially large.

Based on this observation, we use two different diagnostics for the learning complexity of the output distribution. First, we compute $\mathcal{J} = \sum_{S \neq \emptyset} |J_S|$ where $J_S$ are coefficients from the Hamiltonian expansion $H(x) = -\sum_S J_S \prod_{i} Z_i(x)$, and the summation is over all subsets of indices $S \subset \{1, \cdots, N\}$.
For a given $p(x) \gneq 0$, this measure can be computed using the Hadamard-Walsh transformation of $\log p(x)$ (see Appendix~\ref{app:classical_ham}).
We also note that this measure can be seen as a classical analogy of the quantum circuit complexity~\cite{nielsen2005geometric,nielsen2006quantum,stanford2014complexity}.
Second, we indeed train a simple energy-based model for the output distribution of the IQP circuits.
For each output distribution of an IQP circuit $p(x)$ from a given graph and parameters $\{\theta_i, \phi_{ij}\}$, we optimize the log-likelihood cost function given as $\sum_x p(x) \log \tilde{p}_W(x) $ where $\tilde{p}_W(x)$ is the distribution of the energy-based model (i.e., $\widetilde{p}_W(x) \propto e^{f_W(x)}$ where $f_W(x)$ is the output of the network with weights $W$ for an input $x$). Then we compute $\mathrm{KL}[\tilde{p}_{W}(x) || p(x)]$, the distance between the converged distribution $\tilde{p}_{W}(x)$ and $p(x)$ (see Appendix~\ref{app:ml_setup} for detailed machine learning set-ups).

We can already expect some behaviors of those diagnostics for extreme cases.
When $q=0$, one can find a Hamiltonian with single body terms that exactly represents the output distribution, i.e., $p(x) \propto e^{-\sum_i J_i Z_i(x)}$ (see Appendix~\ref{app:classical_ham}).
In this case, $\mathcal{J} = \mathcal{O}(N)$ and we expect that a simple network sufficiently learns the distribution.
On the other hand, when the distribution is classically hard, we expect that our measure $\mathcal{J}$ is exponentially large and the model does not learn the distribution.
We thus introduce a normalized measure $\widetilde{J} = \log \mathcal{J} / \log N$ that converges to a constant when $\mathcal{J}$ is polynomial in $N$, whereas it diverges if $\mathcal{J}$ is exponentially large.

We plot the scaling of the measure $\widetilde{J}$ computed for the output distribution of the circuits in Fig.~\ref{fig:learning_phases}(a). Even though the exact phase transition point is not located, we see that the normalized measure $\widetilde{J}$ increases with $N$ for $Nq \gtrsim 1.0$, which suggests that learning the output distribution becomes difficult in this regime. 
We also plot Fig.~\ref{fig:learning_phases}(b) the distance between the distribution of the network after training $\tilde{p}_{W}(x)$ and the circuit output distribution $p(x)$ trained for up to $N=20$.

We see that the converged distance increases exponentially with $N$ even when $0.06 \lesssim q$, where the output distribution is still far from the scrambled Porter-Thomas distribution.
Thus both figures suggest that learning the distribution becomes difficult earlier than the distribution becomes anticoncentrated or scrambled.

In Appendix~\ref{app:classical_ham}, we provide additional numerical results supporting the learnability transition from \textit{approximate} parent Hamiltonians as well as machine learning results with a different optimization algorithm.
We further note that the networks we used here (with $\sim \alpha^2N^2$ parameters with $\alpha =30$) have $\sim 2 \times 10^3$ times more parameters than the IQP circuit itself (with $\sim N^2/2$ parameters), which agrees with observations that classical description of a distribution from a quantum system can be extremely inefficient~\cite{gao2017efficient,niu2020learnability,park2022expressive}.
Our results suggest that a classical learner cannot easily spoof the output distribution of a circuit even when it is far from the Porter-Thomas distribution.

\textit{Conclusion.--}
By combining the complexity arguments with extensive numerical studies, we have shown that IQP circuits over the random graph undergo several phase transitions related to complexity as the density of two-qubit gates increases.
We additionally argued that the classical learnability the output distribution also becomes classically intractable as the density increase but much earlier than that of the sampling.
Our study shows that the IQP circuit has several fine-grained complexity phases and suggests that the model may show a quantum advantage even when the output distribution is far from the Porter-Thomas distribution.
An interesting open question is whether models with 2D architecture~\cite{bermejo2018architectures,arute2019quantum,napp2022efficient,dalzell2022random} also have such an intermediate phase. As numerical study of those models require more careful analysis due to strong finite-size effects, we reserve the answer to this question for future research.

\textit{Acknowledgements}.--
CYP thanks Jaeyoon Cho, Changhun Oh, Youngrong Lim, and Anuj Apte for helpful discussions and David Wierichs for useful comments.
This project was funded by the Deutsche Forschungsgemeinschaft under Germany’s Excellence
Strategy - Cluster of Excellence Matter and Light for Quantum Computing (ML4Q) EXC 2004/1-390534769
and within the CRC network TR 183 (project grant 277101999) as part of project B01. The numerical simulations were performed on the Juwels and Juwels-Booster clusters at the Forschungszentrum Juelich. The work presented here was completed while both authors were at the University of Cologne.

%\bibliography{references.bib}

%apsrev4-2.bst 2019-01-14 (MD) hand-edited version of apsrev4-1.bst
%Control: key (0)
%Control: author (8) initials jnrlst
%Control: editor formatted (1) identically to author
%Control: production of article title (0) allowed
%Control: page (0) single
%Control: year (1) truncated
%Control: production of eprint (0) enabled
%
\appendix

\onecolumngrid

\renewcommand{\thefigure}{A\arabic{figure}}
\setcounter{figure}{0}

\section{Classical algorithm for sampling from the IQP distribution} \label{app:classical_algo_sampling}
In this section, we (1) introduce a classical algorithm for sampling from the output distribution of IQP circuits and (2) show that more than $1-\mathcal{O}(1/N)$ fraction of the IQP circuit instances over the Erd\H{o}s-Reyni is efficiently classically simulable when the connection probability is smaller than $1/N$.
Our algorithm is based on tensor network contraction, whose complexity is given by in terms of the treewidth of the underlying graph~\cite{markov2008simulating}.

\subsection{Complexity of contracting tensor networks}
In this subsection, we introduce a result by Markov and Shi~\cite{markov2008simulating} which studied the complexity of contracting tensor networks.
Let us first define a tree decomposition of a graph.
\begin{definition}
For a graph $G=(V,E)$, a \textit{tree decomposition} of $G$ is a pair $(\mathcal{X}, T)$ where $\mathcal{X}=\{X_1,\cdots,X_n\}$ is a collection of vertices of $G$ (sometimes called bags), and $T$ is a tree whose nodes are $X_i$, which satisfies the following properties:
\begin{enumerate}
    \item For all $v \in V$, $v$ appears at least one bag.
    \item For every edge $(v,w) \in E$, there is a node in $T$ that contains both $v$ and $w$.
    \item If $X_i$ and $X_j$ both contain a vertex $v \in V$, all nodes $X_k$ of a tree in the path between $X_i$ and $X_j$ also contain $v$.
\end{enumerate}
\end{definition}

For each resulting tree from a tree decomposition, the \textit{width} of the tree is the size of largest bag minus 1.
The treewidth of a graph is defined as follows:
\begin{definition}
    The \textit{treewidth} of a graph $tw(G)$ is defined by the the minimum width among all possible tree decompositions of $G$.
\end{definition}

We next define the line graph as follows:
\begin{definition}
The line graph $L(G)$ of a undirected graph $G=(V,E)$ is a graph with vertex set $E$ where two vertices are adjacent if they have a common vertex in $G$.
\end{definition}

There are several well-known properties of the treewidth. First, the treewidth does not decrease under taking minors of a graph.
\begin{lemma}\label{lemma:minor_tw}
Let $H$ be a minor of $G$, i.e., it is a graph obtained by contracting edges of a subgraph of $G$. Then $tw(H) \leq tw(G)$.
\end{lemma}

See Ref.~\cite{robertson1986graph} for a proof.
Next, we have a bound for the treewidth of the line graph as follows~\cite{markov2008simulating}:

\begin{lemma}\label{lemma:line_graph_tw}
    For any graph $G$ with the maximum degree $\Delta(G)$, the treewidth of the line graph is bounded by
    \begin{align}
        (tw(G) -1)/2 \leq tw(L(G)) \leq \Delta(G)(tw(G) + 1) - 1
    \end{align}
\end{lemma}

Under this set-up, Ref.~\cite{markov2008simulating} proved the following result:
\begin{theorem}
    Let us consider a tensor network with whose underlying graph is $G=(V,E)$. Here, each tensor is assigned to a vertex and an edge $(v,w) \in E$ describes that the corresponding indices of tensors $v$ and $w$ are contracted.
    Then one can contract the tensor network in time $O(|V|^{O(1)} \exp[tw(L(G))])$.
\end{theorem}

\subsection{Tensor network algorithm for sampling from an IQP circuit over a graph}
We now use above lemmas to provide a tensor network algorithm for sampling from an instance of the IQP circuit.
In the main text, we consider an IQP circuit whose interaction graph is given by $G=(V,E)$. The circuit associated with this graph has $N=|V|$ qubits, \textsf{RZ} gate for each qubit $e^{-i\theta_i Z_i}$, and \textsf{IsingZZ} gate between $i$ and $j$-th qubits ($e^{-i Z_i Z_j \phi_{ij}}$) if $(i,j)\in E$. 
Thus, for a given output binary string $x_1 \cdots x_N$, we have
\begin{align}
    \psi_{\theta_i,\phi_{ij}}(x_1 \cdots x_N) = \braket{x_1 \cdots x_N | F_{\theta_i, \phi_{ij}} |+}^{\otimes N}, \qquad
    p_{\theta_i,\phi_{ij}}(x_1\cdots x_N) = | \psi_{\theta_i,\phi_{ij}}(x)|^2.
\end{align}
As we study a tensor network algorithm for given fixed parameters $\{\theta_i\}$ and $\{\phi_{ij}\}$, let us omit those parameters in further description for clarity.

We now consider a tensor network description of the circuit.
Following the standard description~\cite{markov2008simulating}, we convert the input state for each qubit $\ket{+}$ and the state we project to $\ket{x_i}$ to rank-1 tensors. Next, each \textsf{RZ} gate becomes a rank-2 tensor with one input and one output. Finally, the \textsf{IsingZZ} gate is converted into rank-4 tensor with two inputs and two outputs.

\begin{figure}
    \centering
    \includegraphics[width=0.5\textwidth]{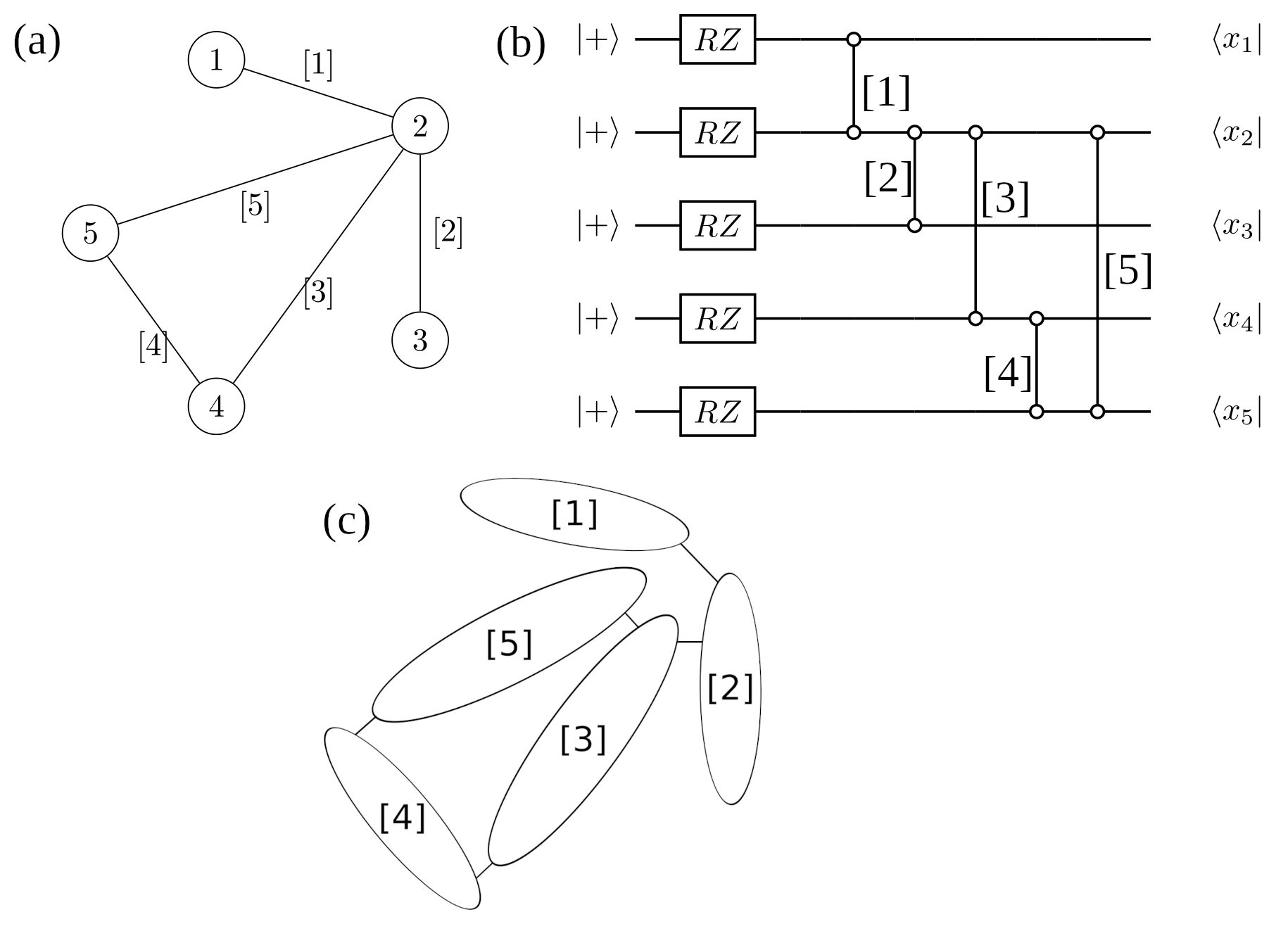}
    \caption{Converting an IQP circuit whose two-qubit gates applies to edges of a graph to a tensor network. (a) A graph instance $G$ and (b) the associated circuit. We use two-qubit gates with open ends to indicate the \textsf{IsingZZ} gate. (c) By contracting the single qubit gates, input states, and projected output states, we obtain a tensor network whose underlying graph is a subgraph of the line graph of $G$.}
    \label{fig:graph_to_tns}
\end{figure}

From this tensor network representation, we can readily contract single qubit gates as well as input and output states (i.e., they are merged into two-qubit gates).
The tensor network after this contraction becomes a graph whose nodes are the tensors for the two-qubit gates (contracted with input/output states) and edges are between nearby two-qubit gates. We show in Fig.~\ref{fig:graph_to_tns} how a original graph to construct a circuit is converted to a tensor network.
A caveat is that the exact tensor network description of the circuit depends on the ordering of two-qubit gates even though the result does not.

A tensor in the final expression has zero, two, three, or four legs depending on the position of the two-qubit gate in the circuit. If a two-qubit gate had one wire directly connected to the input or output wire, it has three legs (as one leg is already contracted).
Likewise, if its two wires were connected to input or output wires, and other wires are connected to other two-qubit gates, the resulting tensor has two legs.
For example, let us consider a two-qubit gate whose one output wire is projected to $\bra{x_i}$. Then the tensor is given as $T^{[a],x_i}_{jkl}$ where $a$ is the index of the tensor, $x_i$ is the given bit we project the state to, and $j,k,l$ are internal indices represents wires connected to other tensors, which are summed over when we contract the tensor network.
Using this notation, we can formally express a component of the output state as
\begin{align}
    \psi(x_1 \cdots x_N) = \mathrm{tTr}\{ T^{[a]}, T^{[a'] x_i},  T^{[a''] x_j x_k}\},
\end{align}
where $\mathrm{tTr}$ indicate a tensor network contraction, $\{T^{[a]}\}$ is a set of tensors without output bit indices (the corresponding two-qubit gate is not adjacent to any projection $\bra{x_i}$), $\{T^{[a']x_i}\}$ are tensors with one output bit index (the two-qubut gate adjacent to one projection), and $\{T^{[a'']x_j x_k}\}$ are tensors with two output bit indices (the gate adjacent to two projections).
In the expression, we omitted internal indices as they are summed over.

Let us denote $\tilde{G}$ by a graph for this tensor network description. A simple observation is that $\tilde{G}$ is a subgrapgh of the line graph $L(G)$ where the maximum degree of vertices is less than $4$.
This is because we assign a two-qubit gate for each edge of $G$, which becomes a node in $\tilde{G}$, and an edge between two nodes in $\tilde{G}$ exists only if they share the same wire (and if they are also adjacent).
By applying lemmas~\ref{lemma:minor_tw} and \ref{lemma:line_graph_tw}, we know that the treewidth of $\tilde{G}$ is upper bounded by
\begin{align}
    tw(\tilde{G}) \leq \Delta(G)(tw(G) + 1) -1.
\end{align}
In addition, the treewidth of the line graph of $\tilde{G}$ is bounded by 
\begin{align}
    tw(L(\tilde{G})) \leq \Delta(\tilde{G}) (tw(\tilde{G}) + 1) - 1 \leq 4 tw(\tilde{G}) + 3 \leq 4 \Delta(G) (tw(G) + 1) - 1
\end{align}
where we have used $\Delta(\tilde{G}) \leq 4$ (as each tensor has at most 4 legs) for the second inequality.
Thus the overall complexity of obtaining $\psi(x_1 \cdots x_N)$ from an IQP circuit over a graph $G=(V,E)$ is $O(|\tilde{G}|^{O(1)} \exp[tw(L(\tilde{G}))])=O(|E|^{O(1)} \exp[tw(4 \Delta(G) tw(G))])$.

We next show that the output probability distribution can also be computed with the same complexity.
This is because we can write the output probability as
\begin{align}
    p(x_1 \cdots x_N) = \mathrm{tTr}\{ T^{[a]} \otimes \overline{T}^{[a]}, T^{[a'] x_i} \otimes \overline{T}^{[a'] x_i},  T^{[a''] x_j x_k} \otimes \overline{T}^{[a''] x_j x_k}\},
\end{align}
where $\overline{T}$ is the complex conjugate of the tensor. Thus one can compute the output probability using a tensor network over the same graph but with doubled internal dimension.
Likewise, one can also compute the marginal distribution $p(\pmb{x}_S)$ for a subset $S \subset \{1, \cdots, N\}$ by using tensors summed over local indices $x_i$ if $i \in [n] \setminus S$ where $[n]=\{1,\cdots,N\}$.
Therefore, we can compute all marginal probabilities $p(\pmb{x}_S)$ for a given $\pmb{x}_S$ in $O(|E|^{O(1)} \exp[tw(4 \Delta(G) tw(G))])$.

We then consider a sampling algorithm for this circuit. We first decompose the probability to nested conditionals
\begin{align}
    p(x_1\cdots x_N ) = p(x_1) \prod_{i=2}^N p(x_i|x_1 \cdots x_{i-1}).
\end{align}
Then we can sample from the distribution using the following procedure. First, we compute $p(x_1=0)$ and choose a random number $r$ from the uniform distribution $\mathcal{U}_[0,1]$. Then choose $x_1 = 0$ if $r < p(x_1=0)$ or $0$ otherwise.
We next compute the conditional probability $p(x_2|x_1) = p(x_1 x_2)/p(x_1)$. As we already know $p(x_1=0)$ and $p(x_1=1)=1-p(x_1=0)$, we only need to compute $p(x_2 x_1)$ for a given $x_1$.
We can sample $x_2$ from the distribution using the same steps.
By repeating these steps, we can sample one bitstring $x_1\cdots x_N$. The overall time complexity of this algorithm comes from computing the marginal distributions. As we need $N$ marginal distributions, the algorithm takes $O(N |E|^{O(1)} \exp[tw(4 \Delta(G) tw(G))])$ to generate a single bitstring.

Finally, we consider the case when a graph $G$ is an instance of the Erd\H{o}s-R\'{e}nyi graph, where each edge is chosen with probability $q$.
When $q < 1/N$, it is proved that the treewidth of the graph is almost surely a constant~\cite{lee2012rank}.
In addition, it is shown in Ref.~\cite{bremner2017achieving} that there is $\gamma > 0$ such that $\mathrm{Pr}[\Delta(G) \geq 2 \gamma \log (N)] \leq N^{1-\gamma/4}$ when $q \leq \gamma \log(N) / N$. As this condition is satisfied for $\gamma =8 $ when $q < 1/N$ and $N \geq e^{1/8} \approx 1.13$, $1-O(1/N)$ fraction of the Erd\H{o}s-R\'{e}nyi graph instances has $\Delta(G) \leq 16 \log N$.
Thus our sampling algorithm takes $\mathcal{O}(N |E|^{O(1)} \exp[tw(4 \Delta(G) tw(G))]) = \mathcal{O}(N \cdot N^2 \exp[ const \cdot log(N)]) = \mathcal{O}(\mathrm{poly}(N))$ time.

\section{Anti-concentration of the output distribution of the IQP circuit over the Erd\H{o}s-R\'{e}nyi graph}\label{app:anticoncentration}
Ref.~\cite{bremner2017achieving} considered IQP-type circuits over the Erd\H{o}s-R\'{e}nyi graph where the single and two-qubit diagonal gates are given by \textsf{RZ} and \textsf{CPhase} gates with discrete random angles, respectively.
In this case, the authors proved that there is $\gamma > 0$ such that the output distribution of the circuit is anticoncentrated when the connection probability of the random graph is larger than $\mathcal{O}( \log N /N)$

In this section, we provide a proof that the output distribution is anticoncentrated under the same probability condition even when our gate set is given by $\{e^{-i \theta_i Z_i}\}\cup \{e^{-i Z_i Z_j \phi_{i,j}}\}$ (\textsf{RZ} and \textsf{IsingZZ} gates) with continuous random angles $\theta_i$ and $\phi_{i,j}$, which are sampled from the uniform distribution over $[0, 2\pi]$.
We prove the anticoncentration property by showing that $Z = \mathbb{E}_U[p_U(x)^2]$ is smaller than $\alpha/2^{2N}$ for a constant $\alpha>1$, where the average is over all possible parameters of the circuit with the given connectivity, and $p_U(x) = |\braket{x|U|0}|^2$ is the probability to obtain $x$ for the given circuit. Then Paley–Zygmund inequality implies that, for any $0 \leq \beta \leq 1$
\begin{align}
    \mathrm{Pr}_{U} \Bigl[ p_U(x) \geq \beta 2^{-N} \Bigr] \geq \alpha^{-1} (1-\beta)^2,
\end{align}
which implies the anticoncentration of the output distribution.

Formally, we write our result as follows:
\begin{proposition}
    Consider an IQP circuit defined over the random graph with the connection probability $q$ . Then there is a constant $\gamma>0$ and $N_0\in \mathbb{N}$ such that the output probability of the circuit is anticoncentrated if $q=(6/\gamma) \log N /N$ for all $N \geq N_0$.
\end{proposition}

We prove this by obtaining the upper bound of $Z$. First, $Z=\mathbb{E}_U[p(x)^2]=\mathbb{E}_U[p(0)^2]$ from the symmetry of the circuit. We first have
\begin{align}
    |\braket{0|U|0}|^4 &= 2^{-4N} \sum_{x,y,z,w \in \{-1,1\}^N} \exp \Bigl[i \sum_{i < j} \phi_{ij} (x_i x_j + y_i y_j - z_i z_j - w_i w_j)\Bigr] \exp \Bigl[ -i \sum_i \theta_i (x_i+y_i-z_i-w_i) \Bigr] \nonumber \\
    &= 2^{-4N} \sum_{x,y,z,w \in \{-1,1\}^N} \exp \Bigl[i \sum_{i < j} \phi_{ij} (x_i x_j + y_i y_j - z_i z_j - w_i w_j)\Bigr] \exp \Bigl[ -i \sum_i \theta_i (x_i+y_i+z_i+w_i) \Bigr].
\end{align}
Averaging this expression over $\{\theta_i\}$ only remains terms with $c_i(x,y,z,w):=x_i+y_i+z_i+w_i = 0$ in the summation.

Given that $\phi_{ij}=0$ with the probability $1-q$ and have a random value with the probability $q$, we also have
\begin{align}
    &\mathbb{E} \Bigl\{ \exp \bigl[i \sum_{i < j} \phi_{ij} (x_i x_j + y_i y_j + z_i z_j + w_i w_j)\bigr] \Bigr\}= (1-q) 1 + q \delta_{x_i x_j + y_i y_j - z_i z_j - w_i w_j}  \nonumber \\
    &\quad = \begin{cases}
        1 &\text{if } x_i x_j + y_i y_j - z_i z_j - w_i w_j = 0\\
        1-q &\text{otherwise}
    \end{cases}.
\end{align}
Thus, for $F_{ij}(x,y,z,w):=x_i x_j + y_i y_j - z_i z_j - w_i w_j$, we have
\begin{align}
    Z = 2^{-4N} \sum_{c_i(x,y,z,w)=0} (1-q)^{|\{(i < j) : F_{i,j} \neq 0\}|}, \label{eq:prob_sqaure_expectation}
\end{align}
where the summation is over all $x,y,z,w \in \{-1,1\}^N$ which satisfies $c_i(x,y,z,w)=0$ for all $i$.

We now introduce a variable $k=\{k_1,\cdots, k_N\}$ where each $k_i=\{x_i,y_i,z_i,w_i\}$ is a 4-dimensional vector $\in \{-1,1\}^4$. We note that, among 16 possible values of $k_i$, only $6$ values satisfy the constraints $c_i$. The following table shows all possible values of $k_i$ with the constraints $c_i$.

\begin{center}
\begin{tabular}{|C{2em}||C{2em} C{2em} C{2em} C{2em} C{2em} C{2em}|} 
 \hline
 & $a$ & $b$ & $c$ & $d$ & $e$ & $f$ \\ 
 \hline\hline
 $x_i$ & 1 & 1 & 1 & -1 & -1 & -1 \\ 
 \hline
 $y_i$ & 1 & -1 & -1 & 1 & 1 & -1 \\
 \hline
 $z_i$ & -1 & 1 & -1 & 1 & -1 & 1 \\
 \hline
 $w_i$ & -1 & -1 & 1 & -1 & 1 & 1 \\
 \hline
\end{tabular}
\end{center}

Here, we used $a$-$f$ to indicate each possible value of $\{x_i,y_i,z_i,w_i\}$. 
Thus the summation in Eq.~\eqref{eq:prob_sqaure_expectation} can be rewritten as the summation over $\{a,b,c,e,d,f\}^N$.
In addition, we also see that all possible pairs of $(k_i, k_j)$ with $F_{i,j}\neq 0$ are $\{(b,c), (b,d), (c,b), (c,e), (d,b), (d,e), (e,c), (e,d)\} =: N$.
As $d=-c$, $e=-b$, $f=-a$, we observe that $(k_i, k_j) \in N$ if and only if $(k_i, -k_j) \in N$.
Using this notation, we write $Z$ as 
\begin{align}
    Z = 2^{-4N} \sum_{k \in \{a,b,c,-a,-b,-c\}^N} (1-q)^{|\{(k_i, k_j) \in N \}|}
\end{align}
As flipping the sign of $k_i$ for any $i$ does not change the summand, we can simplify the expression to
\begin{align}
    & = 2^{-3N} \sum_{k \in \{a,b,c\}^N} (1-q)^{|\{(k_i, k_j) \in N\}|}.
\end{align}

Let $A = |\{i: k_i = a\}|$, $B = |\{i: k_i = b\}|$, $C = |\{i: k_i = c\}|$ be the number of occurrences of $a$, $b$, $c$ in $k$. Then we obtain
\begin{align}
    {|\{(i<j): (k_i, k_j) \in N \}|} = BC
\end{align}
and
\begin{align}
    Z =& 2^{-3N} \sum_{A,B,C \geq 0, A+B+C=N} M(A,B,C) (1-q)^{BC},
\end{align}
where 
\begin{align}
    M(A,B,C) = {N \choose A} {N-A \choose B} {N-A-B \choose C} = \frac{N!}{A!B!C!}
\end{align}
is the number of $k$s with $(A,B,C)$.
We compute the upper bound of the summation by splitting the cases.
First, we consider the case $A \leq (1-\gamma) N$ for some $0 < \gamma < 1/2$ satisfying ${N \choose \gamma N} \leq 2^{N/2}/N$. For example, $\gamma=1/10$ satisfies this condition (which is from ${n \choose k} \leq 2^{n H(k/n)}$ where $H(p)$ is the Shannon entropy in base 2).
Then at least one of $B$ or $C$ is larger than $\gamma N/2$ (w.l.o.g. assume $C \geq \gamma N/2$).

The summation for this case is upper bounded by
\begin{align}
    \sum_{A=0}^{(1-\gamma)N} \sum_{B=0}^{N-A} {N \choose A} {N-A \choose B} (1-q)^{BC} &\leq 2 \sum_{A=0}^{(1-\gamma)N} \sum_{B=0}^{N - A} {N \choose A} {N-A \choose B} (1-q)^{\gamma N B/2} \nonumber \\
    &\leq 2 \sum_{A=0}^{(1-\gamma)N} {N \choose A} \sum_{B=0}^{(1-\gamma)N - A} N^B (1-q)^{\gamma N B/2} \nonumber \\
    &\leq 2 \times 2^N \sum_{B=0}^{N} N^B e^{-q\gamma N B/2}.
\end{align}
where we used ${N-A \choose B} \leq (N-A)^{B}/B! \leq N^B$ for the second inequality and $1-q \leq e^{-q}$ for the last inequality.
When $q \geq (6/\gamma) \log N /N$,
\begin{align}
    N^B e^{-q\gamma N B / 2} \leq N^B e^{-3 B \log N} \leq N^{-2B}.
\end{align}
Thus $\sum_{B=0}^{N} N^B e^{-q\gamma N B/2} \leq 1 + 1/N \leq 2$, and this part contribute at most $4 \times 2^{-2N}$ to $Z$.

Next, we consider $A \geq (1-\gamma) N$. Then we obtain
\begin{align}
     \sum_{A=(1-\gamma)N}^{N} \sum_{B=0}^{N-A} {N \choose A} {N-A \choose B} (1-q)^{BC} &\leq \sum_{A=(1-\gamma)N}^{N} \sum_{B=0}^{\gamma N} {N \choose A} {N \choose B} \nonumber \\
     &\leq \sum_{A=0}^{\gamma N} {N \choose A} \sum_{B=0}^{\gamma N} {N \choose B} \nonumber \\
     &\leq \Bigl[ \frac{2^{N/2} \gamma N}{N} \Bigr ]^2 \leq 2^N.
\end{align}

To summarize, we showed that $Z \leq 5 \times 2^{-2N}$ if $q \geq (6/\gamma) \log N / N$ for some constant $\gamma$.
This completes the proof that the distribution anti-concentrates.

In addition, we also can obtain a lower bound of $Z$ as
\begin{align}
    Z = 2^{-3N} \sum_{A=0}^{N} \sum_{B=0}^{N-A} {N \choose A} {N-A \choose B} (1-q)^{BC} &\geq 2^{-3N} \sum_{A=0}^{N} \sum_{B=0}^{N-A} {N \choose A} {N-A \choose B} (1-q)^{NB} \nonumber \\
    &= 2^{-3N} (2 + (1-q)^N)^N,
\end{align}
where we used $C \leq N$ for the first inequality and the property of the multivariate coefficients for the last equality.
If $q \leq \lambda /N$ for some $\lambda > 0$, we have
\begin{align}
    (1-q)^N \geq (1-\lambda /N)^N =[(1-\lambda /N)^{N/\lambda}]^\lambda.
\end{align}
Note that $(1-\lambda /N)^{N/\lambda}$ approaches to $e^{-1}$ as $N$ increases, and we can find $N_0 := N_0(\lambda)$ such that 
\begin{align}
    \bigl| (1-\lambda /N)^{N/\lambda} - e^{-1} \bigr| \leq 1/12,
\end{align}
for all $N \geq N_0$. Then we have the lower bound
\begin{align}
    Z \geq 2^{-3N} \biggl[ 2 + \Bigl( \frac{11e^{-1}}{12} \Bigr)^\lambda \biggr]^N.
\end{align}
Following Ref.~\cite{dalzell2022random}, we can conclude that the output distribution is not anticoncentrated when $q < \lambda/N$ for any constant $\lambda>0$.

\begin{figure}
    \raisebox{-0.5\height}{\includegraphics[width=0.35\textwidth]{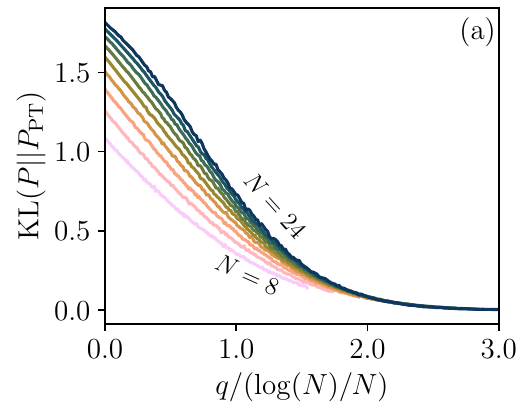}}
    \raisebox{-0.5\height}{\includegraphics[width=0.5\textwidth]{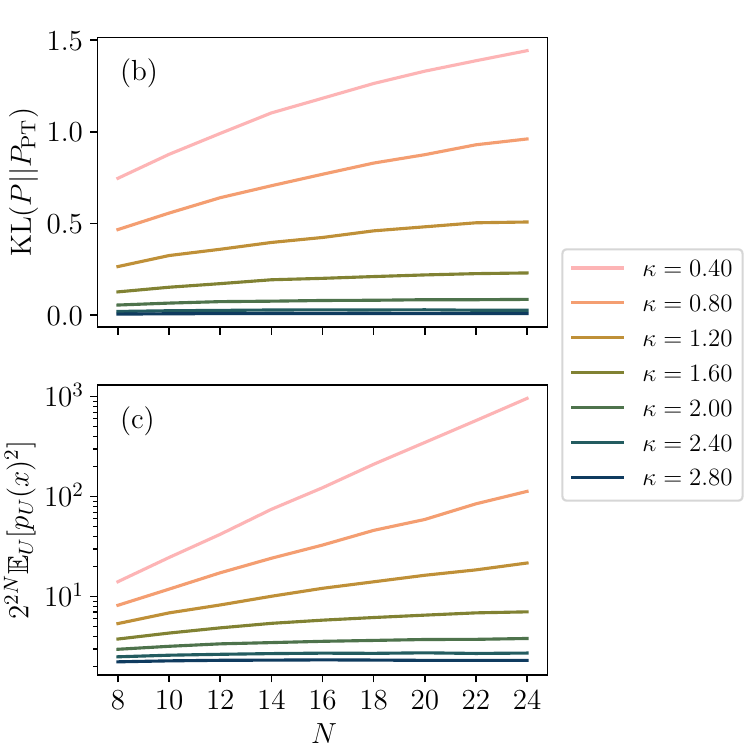}}
    \caption{(a) KL divergence between the output distribution of the IQP circuits over the random graph with the connection probability $q$ and the Porter-Thomas distribution as a function of $q/(\log(N)/N)$.
    For different values of $q=\alpha \log N /N$, we plot (b) the KL divergence between the output distribution and the Porter-Thomas distribution as well as (c) the squared probability $p(x)^2$ averaged over all circuit instances and the output strings $x$. These figures show that the output distribution at $\alpha \approx 2.0$ is anticoncentrated but does not follow the Porter-Thomas distribution.}
    \label{fig:anticon_vs_porter_thomas}
\end{figure}

\section{Anticoncentration versus Porter-Thomas distribution for the IQP circuit} \label{app:anticon_vs_pt}
One of our main results in the paper is that there exists a parameter regime that the output distribution is anticoncentrated albeit it does not follow the Porter-Thomas distribution.
In this section, we provide additional numerical evidence supporting our argument.
Precisely, we find a parameter regime such that the KL divergence between the output distribution of the IQP circuits and the Porter-Thomas distribution converges to a constant (as $N$ increase), but $Z = \mathbb{E}_U[p_U(x)^2] \leq \alpha / 2^{2N}$ for some constant $\alpha > 0$ which is a sufficient condition for the anticoncentration (see Sec.~\ref{app:anticoncentration}).

We show a plot of the KL divergence as a function of $q/(\log N / N)$ in Fig.~\ref{fig:anticon_vs_porter_thomas}(a).
It suggests that the KL divergence does not decreases with $N$, but diverges or converges to a constant for any fixed $q/(\log N / N)$, which implies that the distribution never becomes the Porter-Thomas distribution for a fixed $q/(\log N / N)$.
We further compare the scaling of the KL divergence and $Z=\mathbb{E}_U[p(x)^2]$ that we studied in Sec.~\ref{app:anticoncentration} in Fig.~\ref{fig:anticon_vs_porter_thomas}(b) and Fig.~\ref{fig:anticon_vs_porter_thomas}(c).
Plots are from different values of $\kappa \in [0.4, \cdots 2.8]$ where $q=\kappa \log N/N$.
We see that the KL divergence converges to a constant as $N$ increases when $\kappa \gtrsim 1.6$. This implies that the output distribution maintains a distance to the Porter-Thomas distribution even for large $N$.
In contrast, we see that $Z$ scales with $\beta(\kappa) / 2^{2N}$, which is the condition for the anticoncentration (Sec.~\ref{app:anticoncentration}).
This is the parameter regime that the output distribution is anticoncentrated but does not follow the Porter-Thomas distribution.

\section{Entanglement properties of the IQP circuit} \label{app:ent_iqp}
In this section, we diagnose the phase transition of IQP circuits over the random graph using entanglement spectra.
As the original circuit has no entanglement at $q=0$, we consider a slightly modified circuit with $N-1$ body interactions instead of the conventional one body terms ($\{Z_i\}$).
In other words, we consider a diagonal unitary operator $D_{\theta_i, \phi_{ij}}$ such that 
\begin{align}
D_{\theta_i, \phi_{ij}}=\exp[i \sum_i \theta_i Z_{\neg i}] \exp[i \sum_{i < j} \phi_{ij} Z_i Z_j].
\end{align}
where $Z_{\neg i} = \prod_{j \neq i} Z_j$ is the product of Pauli-Z operators of all sites except $i$.
Thus we consider the IQP circuit given by
\begin{align}
    \psi(x) = \braket{x|H^{\otimes N} D_{\theta_i, \phi_{ij}}|+}^{\otimes N} \label{eq:circuit_D_output_st}
\end{align}
and the output distribution from the circuit
\begin{align}
r(x) = \bigl| \braket{x|H^{\otimes N} D_{\theta_i, \phi_{ij}}|+}^{\otimes N} \bigr|^2.
\end{align}
This choice ensures that the output states are always entangled, which makes the classification of phases via the entanglement spectrum~\cite{pal2010many} meaningful. 

Even though this setup changes the entanglement properties of the output states, it does not change the complexity as there is a simple relation between the distribution $r(x)$ and $p(x)$ that from the IQP circuit with \textsf{RZ} gate.
Precisely, $p(x) = r(x)$ if $x$ has even parity (i.e. $\sum_{k=1}^N x_k$ is even) and $p(x) = r(\overline{x})$ otherwise, with $\overline{x} = (1-x_i)$, which we prove in the following subsection.

\subsection{Output distribution of the IQP circuit with $Z_{\neg i}$}

In this subsection, we show that using $Z_{\neg i}$ instead of $Z_i$ for the IQP circuits does not change the complexity of sampling, which follows from the theorem below.
\begin{theorem}
We consider two diagonal unitary operators $D_{\theta_i, \phi_{ij}}$ and $F_{\theta_i, \phi_{ij}}$, where $D$ is composed of $Z_{\neg i} = \prod_{k \neq i} Z_k$ and two-qubit phase shift gates $\{e^{\theta_i Z_{\neg i}}, e^{\phi_{ij} Z_i Z_j}\}$ whereas $F$ is constructed with one- and two-qubit gates $\{e^{\theta_i Z_i}, e^{\phi_{ij} Z_i Z_j}\}$. We write $r(x) = |\braket{x| H^{\otimes N}D_{\theta_i, \phi_{i,j}}|+}^{\otimes N}|^2$ and $p(x) = |\braket{x|H^{\otimes N}F_{\theta_i, \phi_{ij}}|+}^{\otimes N}|^2$.
When the total number of qubits is even, we have
\begin{align}
    p(x) = \begin{cases}
    r(x) , \text{if }\sum_k x_k\text{ is even} \\
    r(\overline{x}) , \text{otherwise}
    \end{cases}
\end{align}
where $\overline{x}_i = (1-x_i)$. Thus a sampler for $p(x)$ can also sample from $r(x)$ (and vice versa).
\end{theorem}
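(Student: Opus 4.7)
The plan is to compare $p(x)$ and $r(x)$ by expanding both amplitudes in the computational basis as a Hadamard transform of the diagonal phases $\mu_D(y)=\langle y|D|y\rangle$ and $\mu_F(y)=\langle y|F|y\rangle$. Inserting $|+\rangle^{\otimes N}=2^{-N/2}\sum_y|y\rangle$ and using $\langle x|H^{\otimes N}|y\rangle=2^{-N/2}(-1)^{x\cdot y}$ gives
\begin{align}
\langle x|H^{\otimes N}D|+\rangle^{\otimes N}=\frac{1}{2^N}\sum_y \mu_D(y)(-1)^{x\cdot y},
\end{align}
and analogously for $F$. The whole statement then reduces to an identity between two discrete Fourier sums.

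Next I would write down $\mu_D$ and $\mu_F$ explicitly. Because $Z_{\neg i}|y\rangle=(-1)^{P(y)-y_i}|y\rangle$ where $P(y)=\sum_k y_k \bmod 2$, the $Z_{\neg i}$ eigenvalue equals $(-1)^{P(y)}$ times the $Z_i$ eigenvalue. The two-body factor $\exp[i\sum_{i<j}\phi_{ij}Z_iZ_j]$ is the same in both $D$ and $F$. Hence $\mu_D(y)=\mu_F(y)$ when $P(y)$ is even, and when $P(y)$ is odd the $\theta$-exponent simply flips sign, i.e. $\mu_D(y)=\exp[-i\sum_i\theta_i(-1)^{y_i}]\,\nu(y)$ with $\nu(y)$ the common two-body phase. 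The key algebraic observation is that bit-flipping $y\mapsto\bar y$ sends $(-1)^{y_i}\mapsto -(-1)^{y_i}$ (flipping the sign of the one-body exponent) but leaves $(-1)^{y_i+y_j}$ invariant (preserving the two-body phase). Therefore $\mu_D(y)=\mu_F(\bar y)$ for $y$ of odd parity. The assumption that $N$ is even enters here: it guarantees that odd-parity $y$ maps to odd-parity $\bar y$, so the parity classes are preserved by complementation.

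I would then split $\sum_y\mu_D(y)(-1)^{x\cdot y}$ into its even-parity and odd-parity pieces, substitute $z=\bar y$ in the odd-parity sum, and use $(-1)^{x\cdot\bar z}=(-1)^{P(x)}(-1)^{x\cdot z}$ to pull out a global factor $(-1)^{P(x)}$. This produces
\begin{align}
\sum_y\mu_D(y)(-1)^{x\cdot y}=\sum_{y\in E}\mu_F(y)(-1)^{x\cdot y}+(-1)^{P(x)}\!\!\sum_{y\in O}\mu_F(y)(-1)^{x\cdot y}.
\end{align}
For even $P(x)$ this collapses to the full Fourier sum of $\mu_F$, giving $p(x)=r(x)$. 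For odd $P(x)$, the odd-parity piece comes with a minus sign; a short verification using $(-1)^{\bar x\cdot y}=(-1)^{P(y)}(-1)^{x\cdot y}$ shows this is precisely $\sum_y\mu_F(y)(-1)^{\bar x\cdot y}$, i.e. the amplitude of $r$ at $\bar x$. Taking modulus squared yields $p(x)=r(\bar x)$.

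The bulk of the argument is routine Fourier bookkeeping; the only nontrivial step is the parity lemma $\mu_D(y)=\mu_F(\bar y)$ for $y\in O$, which is where the even-$N$ hypothesis is used. Once the identity on amplitudes is established, the sampling-equivalence corollary is immediate: given a sample $x$ from $r$, output $x$ if $P(x)$ is even and $\bar x$ if $P(x)$ is odd (and vice versa), since this bijection has unit Jacobian on the distribution.
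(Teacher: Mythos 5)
Your proposal is correct and follows essentially the same route as the paper's proof: expand the amplitude over the computational basis, split the sum by the parity of $y$, use the complement substitution $y\mapsto\bar y$ on the odd-parity sector (where evenness of $N$ preserves the parity class), and then absorb the resulting sign $(-1)^{P(x)}$ into a relabeling $x\mapsto\bar x$ when $P(x)$ is odd. Your isolation of the identity $\mu_D(y)=\mu_F(\bar y)$ on odd-parity $y$ as an explicit lemma is a slightly cleaner packaging of the same computation, but the argument is the same.
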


\begin{proof}
Direct calculation gives
\begin{align}
    &\braket{x| H^{\otimes N} D_{\theta_i, \phi_{i,j}}|+}^{\otimes N}=\braket{x| H^{\otimes N} \exp \Bigl[i\sum_{i < j} \phi_{ij} Z_i Z_j +\sum_{i} \theta_i Z_{\neg i}\Bigr]| +}^{\otimes N} \\
    &\qquad = \frac{1}{\sqrt{2^N}}\sum_y \braket{x | H^{\otimes N} \exp \Bigl[i\sum_{i < j} \phi_{ij} Z_i Z_j +i \sum_{i} \theta_i Z_{\neg i}\Bigr]|y}\\
    &\qquad=\frac{1}{\sqrt{2^N}}\sum_y \braket{x | H^{\otimes N} | y} \exp \Bigl[i \sum_{i < j} \phi_{ij} (1-2y_i)(1-2y_j) + i \sum_i \theta_i (-1)^{\sum_k y_k}(1-2y_i)\Bigr]
\end{align}
where we used $Z_i\ket{y} = (-1)^{y_i} = (1-2y_i)$ and $Z_{\neg i}\ket{y} = \prod_{k \neq i} Z_k \ket{y} = Z^{\otimes N} Z_i \ket{y} = (-1)^{\sum_k y_k} (-1)^{y_i}$.
By decomposing the summation over $y \in \{0, 2^N-1\}$ into $S_{\rm even} = \{y:\sum_k y_k \text{ is even}\}$ and $S_{\rm odd} = \{y:\sum_k y_k \text{ is odd}\}$, we have

\begin{figure}
    \centering
    \includegraphics[width=0.95\linewidth]{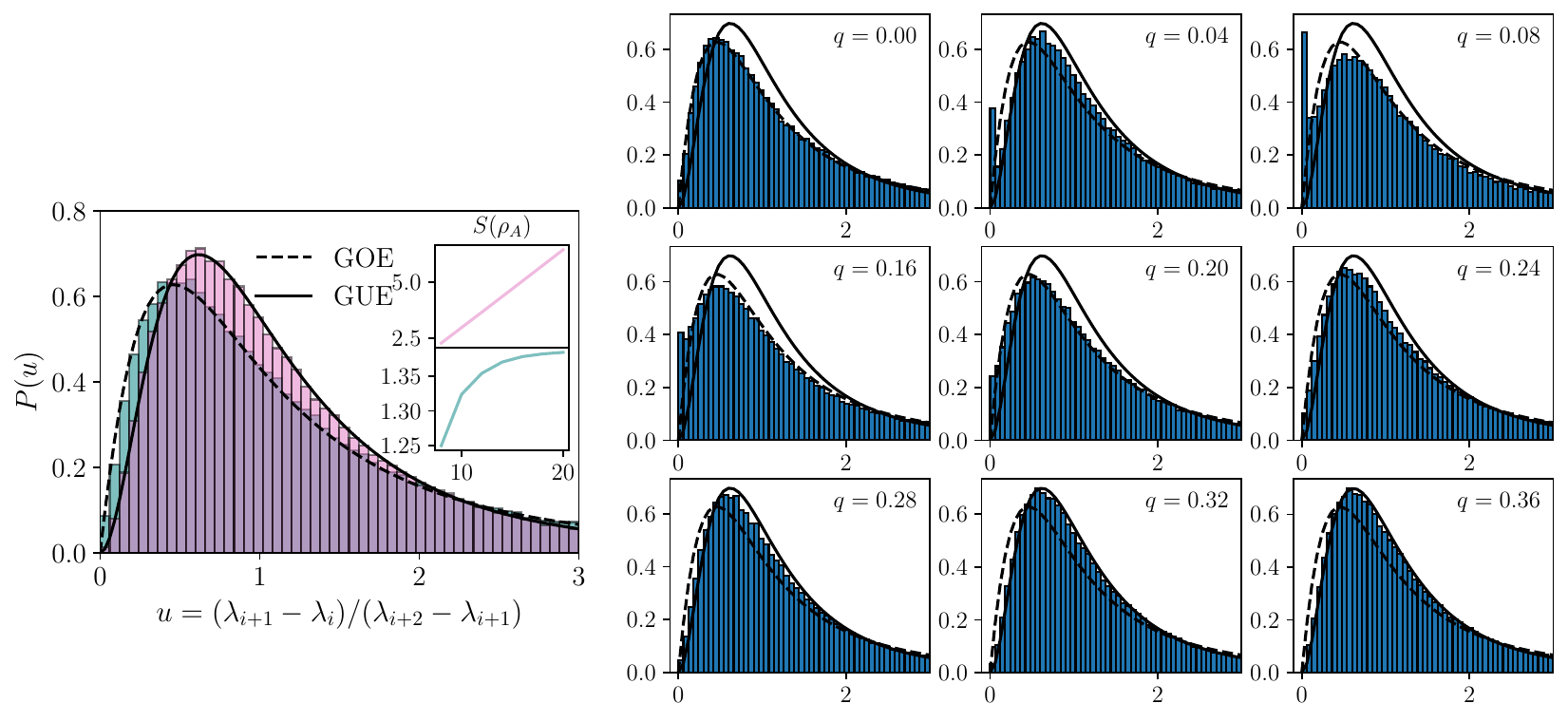}
    \caption{Entanglement properties of the IQP circuit over the random graph. Left: Distribution of the ratio between adjacent entanglement spectra for $q=0.0$ (pink) and $q=1$ (teal). We see that the distribution follows that from the Gaussian orthogonal ensemble (GOE) when $q=0.0$ whereas it shows the Gaussian unitary ensemble (GUE) property when $q=1.0$. Inset shows the entanglement entropy as a function of the size of the system. Right: Entanglement spectra gaps for different values of $q$. As $q$ increases, the distribution changes from the GOE to GUE.}
    \label{fig:iqp_entanglement}
\end{figure}

\begin{align}
&= \frac{1}{\sqrt{2^N}} \biggl\{ \sum_{y \in S_{\rm even}} \braket{x | H^{\otimes N} | y} \exp \Bigl[i \sum_{i < j} \phi_{ij} (1-2y_i)(1-2y_j) + i\sum_i \theta_i (1-2y_i)\Bigr] \nonumber \\
&\qquad \qquad + \sum_{y \in S_{\rm odd}}\braket{x | H^{\otimes N} | y} \exp \Bigl[i \sum_{i < j} \phi_{ij} (1-2y_i)(1-2y_j) - i\sum_i \theta_i (1-2y_i)\Bigr] \biggr\} \\
&= \frac{1}{\sqrt{2^N}} \biggl\{ \sum_{y \in S_{\rm even}} \braket{x | H^{\otimes N} | y} \exp \Bigl[i \sum_{i < j} \phi_{ij} (1-2y_i)(1-2y_j) + i\sum_i \theta_i (1-2y_i) \Bigr] \nonumber \\
&\qquad \qquad + \sum_{y' \in S_{\rm odd}}\braket{x | H^{\otimes N} | \overline{y'}} \exp \Bigl[i \sum_{i < j} \phi_{ij} (1-2y_i')(1-2y_j') + i\sum_i \theta_i (1-2y_i')\Bigr] \biggr\}\\
&= \frac{1}{\sqrt{2^N}} \Bigl\{ \sum_{y \in S_{\rm even}} \braket{x | H^{\otimes N} | y} \braket{y|F_{\theta_i, \phi_{i,j}}|y} + \sum_{y' \in S_{\rm odd}} \braket{x | H^{\otimes N} | \overline{y'}} \braket{y'|F_{\theta_i, \phi_{i,j}} | y'} \Bigr\}
\end{align}
where we changed the variable $y_i= 1-y_i'$ in the latter sum ($y'$ has the same parity as $y$ as we only consider even $N$) and used the notation $\ket{\overline{y}} = \otimes_i\ket{1-y_i}$.
As $\braket{x|H^{\otimes}|y} = (-1)^{\sum_k x_k y_k}/\sqrt{2^N}$ and $\braket{x|H^{\otimes}|\overline{y}} = (-1)^{\sum_k x_k (1-y_k)}/\sqrt{2^N} = (-1)^{\sum_k x_k} (-1)^{\sum_k x_k y_k}/\sqrt{2^N}$, we see that $p(x)=r(x)$ when $\sum_k x_k$ is even.
On the other hand, when $\sum_k x_k$ is odd, we have 
\begin{align}
&= \frac{1}{\sqrt{2^N}} \Bigl\{ \sum_{y \in S_{\rm even}} \braket{x | H^{\otimes N} | y} \braket{y|F_{\theta_i, \phi_{i,j}}|y} + \sum_{y' \in S_{\rm odd}} (-1)\braket{x | H^{\otimes N} | y'} \braket{y'|F_{\theta_i, \phi_{i,j}} | y'} \Bigr\} \\
&= \frac{1}{\sqrt{2^N}} \Bigl\{ \sum_{y \in S_{\rm even}} \braket{\overline{x} | H^{\otimes N} | y} \braket{y|F_{\theta_i, \phi_{i,j}}|y} + \sum_{y' \in S_{\rm odd}} \braket{\overline{x} | H^{\otimes N} | y'} \braket{y'|F_{\theta_i, \phi_{i,j}} | y'} \Bigr\} \\
&= \braket{\overline{x} | H^{\otimes N} F_{\theta_i, \phi_{i,j}} | +}^{\otimes N}
\end{align}
where we have used $\braket{\overline{x} | H^{\otimes N} | y} = (-1)^{\sum_k y_k} \braket{x | H^{\otimes N} | y}$ in the second line. Thus, $p(x) = r(\overline{x})$ if $\sum_k x_k$ is odd.
\end{proof}

\subsection{Entanglement spectra with varying density of two-qubit gates}
We now study entanglement properties of the output state [Eq.~\eqref{eq:circuit_D_output_st}] from IQP circuits with $Z_{\neg i}$.
For each circuit realization, we compute the density matrix after tracing out the half of the qubits ($B=[N/2+1, \cdots, N]$) and obtain the density matrix $\rho_A$.
By diagonalizing $\rho_A$, we obtain an ordered entanglement spectrum $\{\lambda_i\}$.
Then the distribution for the ratio of the adjacent eigenvalues $u = (\lambda_{i+1}-\lambda_{i})/(\lambda_{i+2}-\lambda_{i+1})$ is computed.
We use the ratios as they give better signature~\cite{pal2010many}, and the exact distributions for the Gaussian orthogonal ensemble (GOE) and the Gaussian unitary ensemble (GUE) are known~\cite{atas2013distribution}.

In Fig.~\ref{fig:iqp_entanglement} (Left), we plot the distribution of the entanglement spectra obtained for $N=20$. The distribution clearly follows that from the GOE when $q=0.0$ whereas it follows the GUE when $q=1.0$. We also obtained that the bipartite entanglement entropy (inset) follows the volume law ($S(\rho) \propto N$) and the area law ($S(\rho) = \mathcal{O}(1)$) for $q=1.0$ and $q=0.0$, respectively.

We also plot how the distribution changes as $q$ increases in Fig.~\ref{fig:iqp_entanglement} (Right). One sees that the distribution does not change smoothly, but an initial peak appears for small $q$. We believe that this is a signature of another phase between the GOE and GUE phases that is related to anticoncentration.

\section{Learning the output distribution of the IQP circuit} \label{app:learning_output_dist_iqp}
In the main text, we argued that the output distribution of IQP circuits over the random graph is learnable when $q=0$ (i.e., only with single qubit gates).
On the other hand, we stated that, under a plausible assumption, there is an instance with $N^{\Omega(1)}$ edges whose output distribution is difficult to be learnt. In this section, we prove those statements.

\subsection{Efficient learnability of a distribution with a single binary outcome}
Let us first consider learning a distribution $P$ over a single bit $x \in \{0,1\}$ where $P(x=0) =: p_0$ and $P(x=1) = 1-p_0$.
Our task is reconstructing a probability distribution $Q(x)$ such that $\mathrm{TV}(P,Q) \leq \epsilon$ where 
\begin{align}
    \mathrm{TV}(P,Q) = \frac{1}{2}\sum_{x \in \{0,1\}} |P(x) - Q(x)|
\end{align}
is the total variational distance between two probability distributions.

For a given $\mathcal{N}$ samples, we estimate $p_0$ as $q_0 = \mathcal{N}_0/\mathcal{N}$ where $\mathcal{N}_0$ is the number of $0$s in the samples.
Then one can see the number of observed $\mathcal{N}_0$ follows the binomial distribution $X \sim B(\mathcal{N}, p_0)$. Thus we have
\begin{align}
    \mathrm{Pr}(|p_0 - q_0| \geq \epsilon) &= \mathrm{Pr}( |\mathcal{N}p_0 - \mathcal{N}_0| \geq \mathcal{N} \epsilon) =\mathrm{Pr}(|E(X) - X| \geq \mathcal{N}\epsilon) \nonumber  \\
    & \leq \frac{\mathcal{N} p_0(1-p_0)}{(\mathcal{N}\epsilon)^2} \leq \frac{1}{4 \mathcal{N} \epsilon^2 },
\end{align}
where we used $E(X) = \mathcal{N}p_0$ and the Chevshev's inequality with $\mathrm{Var}(X) = \mathcal{N} p_0 (1-p_0) \leq \mathcal{N}/4$. 

As we can write the total variational distance as
\begin{align}
    \mathrm{TV}(P,Q) &= \frac{1}{2} \Bigl[ | P(x=0) - Q(x=0) | + | P(x=1) - Q(x=1) | \Bigr] \nonumber \\
    &= \frac{1}{2} \Bigl[ | p_0 - q_0 | + | (1-p_0) - (1-q_0) | \Bigr] = |p_0 - q_0|,
\end{align}
we have
\begin{align}
    \mathrm{Pr}\bigl[ \mathrm{TV}(P,Q) \leq \epsilon \bigr] = 1 - \mathrm{Pr}(|p_0 - q_0| > \epsilon) \geq 1 - \frac{1}{4 \mathcal{N} \epsilon^2 }.
\end{align}

Therefore, for any given $\epsilon,\delta \in [0,1]$, we can find $Q$ such that $\mathrm{TV}(P,Q) \leq \epsilon$ with probability $1-\delta$ when
\begin{align}
    \mathcal{N} \geq \frac{1}{4 \epsilon^2 \delta}. \label{eq:learn_sample_size}
\end{align}
As our algorithm finds $q_0$, it is an evaluator. In addition, using a given random number from the uniform distribution $r \in \mathcal{U}_{[0,1]}$, we can construct a circuit which outputs $0$ if $r < q_0$ or $1$, otherwise. Thus the distribution is also learnable w.r.t. a generator.

\subsection{Efficient learnability of product distributions}
We now use the result from the previous section to prove the learnability of a product distribution. For a given $N$, we consider a probability distribution given by $p(x) = \prod_{i=1}^N p_i(x_i)$ where $x=\{x_1,\cdots, x_n\} \in \{0,1\}^N$.
As in the previous subsection, we assume that $\mathcal{N}$ samples from the distribution are provided. In addition, let $\mathcal{N}^i_0$ be the number of samples where the $i$-th bit is $0$.
Because $\{p_i(x_i)\}$ are mutually independent, one can see that each $\mathcal{N}^i_0$ follows the binomial distribution $B(\mathcal{N}, p_i(x=0))$.
Then, using the above result, we can reconstruct a $q_i(x_i)$ which is close to $p_i(x_i)$.
Finally, we can construct the product of the reconstructed distributions $Q(x) = \prod_{i=1}^N q_i(x_i)$ which is close to $P(x)$.

More precisely, as
\begin{align}
    \mathrm{TV}(P, Q) = \frac{1}{2} \sum_{x \in \{0,1\}^N} |P(x) - Q(x)| \leq N \max_i \mathrm{TV}(p_i, q_i),
\end{align}
this quantity can be bounded by $\epsilon$ when $\mathrm{TV}(p_i,q_i) \leq \epsilon/N$. Substituting $\epsilon$ with $\epsilon/N$ in Eq.~\eqref{eq:learn_sample_size}, we know that such a distribution $Q$ is learnable with the probability $1-\delta$ if
\begin{align}
    \mathcal{N} \geq \frac{N^2}{4 \epsilon^2 \delta}.
\end{align}

\subsection{Difficulty of learning the output distribution of the IQP circuit with a sparse connectivity}
We now prove that, under a slightly modified version of the learning parities with noise assumption, the output distribution of the IQP circuit with $N^{\Omega(1)}$ edges is not efficiently learnable.
Formally, the learning parities with noise is defined as follows.

\begin{conjecture}[Learning parities with noise]\label{conj:lpn}
    For any bitstring $s \in \{0,1\}^{k}$, we define a boolean function $f_s(x) = s\cdot x$ for all $x \in \{0,1\}^k$. Then for any $\eta \in (0,1/2)$ the ``noisy parity distribution on $k+1$ bits'' is defined as
    \begin{align}
        P_{s,\eta}(x,y) = \begin{cases}
            2^{-k} (1-\eta)&, \text{ if } y = f_s(x) \\
            2^{-k} \eta    &, \text{ otherwise }
        \end{cases}, \label{eq:lpn}
    \end{align}
    where we consider $(x,y) \in \{0,1\}^{k+1}$ as a bitstring with $k+1$ bits. Then a conjecture is that, there is $\eta \in (0, 1/2)$ and $s \in \{0,1\}^k$ such that there is no classical efficient algorithm for learning $P_{s,\eta}(x,y)$ w.r.t. an evaluator (polynomial in $k$).
\end{conjecture}

However, for our problem, we need a slightly stronger assumption.
\begin{conjecture}\label{conj:lpn_modified}
    For some $\eta \in (0,1/2)$ and $s \in \{0,1\}^k$, there is no classically efficient algorithm for learning $P_{s,\eta}(x,y) \otimes \Id^{k}$, where $\Id^{k}$ is the uniform distribution defined over $k$ bits.
\end{conjecture}
This condition is equivalent to that learning $P_{s,\eta}(x,y)$ is difficult for some $s \in \{0,1\}^k$ with $w(s) < k/2$, where $w(s)$ is the Hamming weight of $s$ (the number of $1$ in the bitstring).
We note that it is possible to prove Conjecture~\ref{conj:lpn_modified} using Conjecture~\ref{conj:lpn} if we use the multiplicative error model.

\begin{proposition}[Hardness of the product distribution]
    Assume that a probability distribution $P(x)$ is classically hard w.r.t. an evaluator.
    Then the probability distribution $P \otimes \Id^k$ (for $k=|x|$) is also difficulty w.r.t. an evaluator, but up to a multiplicative error.
    In other words, there is no classically efficient algorithm finding $Q(x,y)$ such that, for any $\epsilon >0$,
    \begin{align}
        (1-\epsilon) 2^{-k}P(x) < Q(x,y) < (1+\epsilon) 2^{-k}P(x).
    \end{align}
\end{proposition}
\begin{proof}
    Let us assume that there is such a classically efficient algorithm. Then we estimate $P(x)$ with $\tilde{P}(x)=2^k Q(x,0)$. 
    Then, one can easily see that $\mathrm{TV}(\tilde{P},P) \leq \epsilon$, which implies that $P$ is efficiently learnable.
\end{proof}

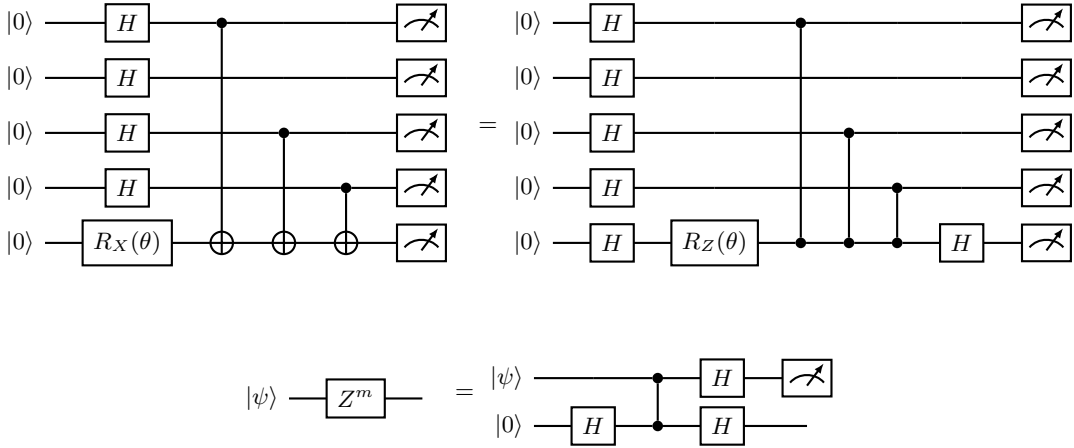
\begin{figure}
    \centering
    \begin{quantikz}[row sep=0.13cm]
    \lstick{$\ket{0}$} & \gate{H}           & \ctrl{4}   & \qw      & \qw      & \meter{} \\
    \lstick{$\ket{0}$} & \gate{H}           & \qw        & \qw      & \qw      & \meter{} \\
    \lstick{$\ket{0}$} & \gate{H}           & \qw        & \ctrl{2} & \qw      & \meter{} \\
    \lstick{$\ket{0}$} & \gate{H}           & \qw        & \qw      & \ctrl{1} & \meter{} \\
    \lstick{$\ket{0}$} & \gate{R_X(\theta)} & \targ{}    & \targ{}  & \targ{}  & \meter{} 
    \end{quantikz}\quad =
    \begin{quantikz}[row sep=0.13cm]
    \lstick{$\ket{0}$} & \gate{H} & \qw                & \ctrl{4}   & \qw        & \qw        & \qw & \meter{} \\
    \lstick{$\ket{0}$} & \gate{H} & \qw                & \qw        & \qw        & \qw        & \qw & \meter{} \\
    \lstick{$\ket{0}$} & \gate{H} & \qw                & \qw        & \ctrl{2}   & \qw        & \qw & \meter{} \\
    \lstick{$\ket{0}$} & \gate{H} & \qw                & \qw        & \qw        & \ctrl{1}   & \qw & \meter{} \\
    \lstick{$\ket{0}$} & \gate{H} & \gate{R_Z(\theta)} & \control{} & \control{} & \control{} & \gate{H} & \meter{} 
    \end{quantikz}\\
    \vspace{1cm}
    \begin{quantikz}[row sep=0.13cm]
    \lstick{$\ket{\psi}$} & \gate{Z^m} & \qw
    \end{quantikz} \quad = 
    \begin{quantikz}[row sep=0.13cm]
    \lstick{$\ket{\psi}$} & \qw      & \ctrl{1}   & \gate{H} & \meter{} \\
    \lstick{$\ket{0}$}    & \gate{H} & \control{} & \gate{H} & \qw
    \end{quantikz} 
    \caption{Circuit identities we used. (Upper) A circuit whose measurement outcomes follow the noise with parity distributions can be transformed into a circuit with Hadamard gates followed by diagonal gates. (Lower) The teleportation circuit we use to add the last Hadamard layer to match with the IQP circuit. Here, $m$ is the measurement outcome from the right circuit.}
    \label{fig:lpn_to_iqp}
\end{figure}

We now consider $D_{s,\eta}(n,k)$, which is defined for $n$-bits whose first $k+1$ follows $P_{s,\eta}$ and the last $n-(k+1)$ bits is a trivial distribution.
In other words, for the trivial distribution $T_l$ over $l$-bits defined as $T_l(x) = 1$ iff $x=0^l$, we have $D_{s,\eta}(n,k) = P_{s,\eta} \otimes T_l$.
The above conjecture says that learning the distribution $D_{s,\eta}(n,k)$ with $k=n^{\Omega(1)}$ is also classically hard (as $\omega(\mathrm{ploy}(k))=\omega(\mathrm{poly}(n))$; see Ref.~\cite{hinsche2022single} for detailed discussion).

Then let us consider a circuit for $N$ qubits initialized with $\ket{0}^{\otimes N}$.
We first apply the Hadamard gate to the first $k$ qubits and the \textsf{RX} gate ($e^{-i X \theta /2}$) to the $k+1$-th qubit.
Then for a given $s = \{s_1,\cdots,s_k\}$, we apply the \textsf{CNOT} gates between $i$-th and the $k+1$-th qubits for all $s_i = 1$. This gives a quantum state
\begin{align}
    \ket{\psi} = \frac{1}{\sqrt{2^k}} \sum_{x=0}^{2^k} \Bigl[ \cos(\theta/2)  \ket{x} \otimes \ket{s \cdot x} \otimes \ket{0^{n-(k+1)}} - i \sin(\theta/2)  \ket{x} \otimes \ket{s \cdot x \oplus 1} \otimes \ket{0^{n-(k+1)}}  \Bigr].
\end{align}
Then it is simple to check that the output probability follows $D_{s,\eta}(N,k)$ with $\eta=\sin^2(\theta/2)$.
Using the Hadamard transformation for the \textsf{RX} gate $e^{-iX \theta/2} = H e^{-i Z \theta/2} H$ and the \textsf{CNOT} gate $\mathrm{CNOT}_{ij} = H_j \mathrm{CZ}_{ij} H_j$ (where $i$ is the control qubit and $j$ is the target qubit), the circuit has initial Hadamard gates followed by the diagonal gate.
We demonstrate this circuit in Fig.~\ref{fig:lpn_to_iqp} (upper).

However, the transformed circuit still does not have the final Hadamard layer on the first $k$ qubits to be the IQP circuit. We thus use the circuit identity given in Fig.~\ref{fig:lpn_to_iqp} (lower). We note that the measurement outcomes $m$ follows the uniform distribution $p(m=0)=p(m=1)=1/2$.

Combining these results, we construct a circuit as follows. For some $k$ and $N \geq 2k+1$, our circuit has the initial state $\ket{0}^{\otimes N}$ followed by the Hadamard layer $H^{\otimes N}$.
We next apply $R_Z(\theta)$ to the $2k+1$-th qubit, and then for a given $s \in \{0,1\}^k$, apply the \textsf{CZ} gate between $i$-th and $2k+1$-th qubits if $s_i =1$.
After that, we apply the \textsf{CZ} gates between $i$-th and $k+i$-th qubits for all $1 \leq i \leq k$.
Finally, the last Hadamard later $H^{\otimes N}$  is applied and all qubits are measured in the computational basis.
For this circuit, one can check that the outcome distribution is $\Id_{k} \otimes P_{s, \eta} \otimes T_{n-(2k+1)}$, a product between three distributions: The first $k$ bits follows the uniform distribution, the next $k+1$ bits follow $P_{s, \eta}$, and the last $n-(2k+1)$ bits follow the trivial distribution $T_{n-(2k+1)}$.
After decomposing the \textsf{CZ} gate into the \textsf{RZ} and \textsf{IsingZZ} gates as $\mathrm{CZ}=e^{i \pi /4} e^{i \pi/4 Z \otimes Z} e^{ -i \pi/4 Z \otimes \Id}e^{ -i \pi/4 \Id \otimes Z}$, we see that the circuit is exactly the IQP circuit we considered in the main text with $2k$ two-qubit gates.

Thus under the assumption that Conjecture~\ref{conj:lpn_modified} is true, there is an IQP circuit with $2 n^{\Omega(1)} = n^{\Omega(1)}$ two-qubit gates (i.e., with underlying graph of $2n^{\Omega(1)}$ edges), whose output distribution is not efficiently learnable w.r.t. an evaluator.

We finally note that it is also possible to write down a \textit{single} IQP circuit in a two-dimensional lattice whose output distribution is not classically efficiently evaluated by a polynomial-size circuit (unless the polynomial hierarchy collapses) up to a constant multiplicative error~\cite{gao2017efficient}.

\section{Classical parent Hamiltonian of the output distribution} \label{app:classical_ham}
In the main text, we introduced a measure given by the summation of weights of the classical parent distribution to argue the classical hardness of learning the distribution.
In this section, we show the this measure can be computed when the probability distribution $p(x)$ is given, and discuss how this measure scales when we allow an approximate distribution.

\subsection{Algorithm for obtaining the classical parent Hamiltonian}
As we only consider binary outcomes, i.e. $x=(x_i)$ where $x_i\in\{0, 1\}$, we can always write $H(x) = -\sum_{S} J_S \prod_{i \in S} Z_{i}(x)$ where $S$ is the set of all subsets of indices $\{1, \cdots, N \}$ and $Z_i$ is a classical spin variable defined as $Z_i(x) = (-1)^{x_i}$.
For simplicity, we rewrite the sum using a binary representation of $S$ given as $y(S)=(y_i)$ where $y_i = 1$ if $i \in S$ or $0$ otherwise, which yields $H(x) = -\sum_{y=0}^{2^N-1} J(y) \mathbf{Z}_{y}(x)$ where $\mathbf{Z}_{y}(x) = (-1)^{x \cdot y}$.
As this is nothing but the Walsh-Hadamard transformation ($W\{f(x)\} = \sum_y f(x) (-1)^{x \cdot y}$), we can obtain $J(y)$ using the inverse transformation, i.e. $J(y) = W^{-1}\{-H(x)\}$, which is $(1/2^N) \sum_{x} \log p(x) (-1)^{x \cdot y}$ when $y \neq 0$ and $J(y=0) = -\log Z$.
The Gibbs state of this Hamiltonian \textit{exactly} represents the given distribution $p(x)$.
Our method here works when $p(x) \gneq 0$ for all $x$, which holds for the output distribution of the IQP circuits.

\subsection{Analytic expression of the classical parent Hamiltonian when $q=0.0$}
When $q=0.0$, we obtain
\begin{align}
    \braket{x|H^{\otimes N} e^{\sum_i \theta_i Z_i} | +}^{\otimes N} = \braket{x | e^{\sum_i i \theta_i X_i} | 0}^{\otimes N} = \prod_k \braket{x_k| \cos(\theta_k) \Id + i \sin(\theta_k)X | 0}.
\end{align} 
We thus have $p(x) = |\braket{x|H^{\otimes N} e^{\sum_i \theta_i Z_i} | +}^{\otimes N}|^2 = \prod_k f_{x_k}(\theta_k)$ where $f_{0}(\theta) = \cos^2(\theta)$ and $f_1(\theta) = \sin^2(\theta)$. 
It follows that the classical Hamiltonian $-H=\sum_i \alpha_i Z_i$ reproduces $p(x)$ for $\alpha_i = \log \tanh (\theta_i)$.

\subsection{Truncating Hamiltonian and approximating the distribution for a given target error}~\label{app:approx_hamiltonian}
In the main text, we used a Hamiltonian that exactly represents the given target distribution. 
However, one can also obtain a Hamiltonian that \emph{approximates} the target distribution by truncating small terms.
In fact, when there are lots of small terms in the Hamiltonian, we can ignore them when the target accuracy $\epsilon$ is given.

\begin{proposition}[Efficient approximate description of amplitudes]
For $0 < \epsilon <1$, suppose that we have $\tilde{H}(x)$ which satisfies $\max_x |H(x)-\tilde{H}(x)| < \epsilon$. Let $p(x) = e^{-H(x)}/Z$ and $\tilde{p}(x) = e^{-\tilde{H}(x)}/\widetilde{Z}$ where $Z = \sum_x e^{-H(x)}$ and $\widetilde{Z} = \sum_x e^{-\tilde{H}(x)}$. 
Then $|| p(x) - \tilde{p}(x)||_1 := \sum_x |p(x) - \tilde{p}(x)| \leq 2(e-1)\epsilon$.
\end{proposition}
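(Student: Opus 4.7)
The plan is to compare the unnormalized Boltzmann weights first and absorb the normalization mismatch separately, via the intermediate object $e^{-\tilde H(x)}/Z$. Writing $\delta(x) = \tilde H(x) - H(x)$, the hypothesis reads $|\delta(x)| < \epsilon$ for every $x$. The triangle inequality then yields
\begin{align}
|p(x) - \tilde p(x)| &\leq \left|\frac{e^{-H(x)}}{Z} - \frac{e^{-\tilde H(x)}}{Z}\right| + \left|\frac{e^{-\tilde H(x)}}{Z} - \frac{e^{-\tilde H(x)}}{\tilde Z}\right| \nonumber \\
&= p(x)\,|1 - e^{-\delta(x)}| + \tilde p(x)\,\bigl|\tilde Z/Z - 1\bigr|.
\end{align}
This split is the key trick: each summand is the product of a probability distribution evaluated at $x$ with a factor I can uniformly bound by $e^\epsilon - 1$.

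For the first factor, since $|\delta(x)| < \epsilon$, the map $\delta \mapsto 1 - e^{-\delta}$ attains its extremes at $\pm\epsilon$, and the symmetric upper bound is $|1 - e^{-\delta(x)}| \leq e^\epsilon - 1$. For the second factor, note that $\tilde Z/Z = \sum_x p(x) e^{-\delta(x)}$ is a convex combination of values in $[e^{-\epsilon}, e^\epsilon]$, hence itself lies in that interval, and $|\tilde Z/Z - 1| \leq e^\epsilon - 1$. Summing the displayed inequality over $x$ and using $\sum_x p(x) = \sum_x \tilde p(x) = 1$ yields
\begin{equation}
\|p - \tilde p\|_1 \leq 2(e^\epsilon - 1).
\end{equation}

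To finish, I upgrade this to the stated $2(e-1)\epsilon$ bound using that $\epsilon \mapsto (e^\epsilon - 1)/\epsilon$ is monotonically increasing on $(0,\infty)$ (immediate from the Taylor expansion $\sum_{k \geq 1} \epsilon^{k-1}/k!$). Therefore, for $\epsilon \in (0,1)$, $e^\epsilon - 1 \leq (e - 1)\epsilon$, which closes the argument.

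There is no real obstacle here; the only subtlety is choosing the right intermediate object. A naive one-step bound through $\tilde p(x) = p(x) e^{-\delta(x)} Z/\tilde Z$ produces a bound of order $e^{2\epsilon} - 1$, which fails to imply $2(e-1)\epsilon$ on the whole interval $(0,1)$ since $e^2 - 1 > 2(e-1)$. The two-step decomposition above is what makes the factor $2$ rather than an effective factor $e^\epsilon + 1$ appear, and keeps the final constant tight enough for the monotonicity estimate to close the gap.
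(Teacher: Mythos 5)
Your proof is correct and follows essentially the same route as the paper's: the same triangle-inequality split through the intermediate object $e^{-\tilde H(x)}/Z$, the same reduction of both terms to a uniform bound on $|1-e^{-\delta(x)}|$ and on $|\widetilde{Z}/Z-1|$, and the same final elementary inequality (the paper uses $|1-e^{-y}|\leq(e-1)|y|$ for $|y|\leq 1$, which is your monotonicity estimate $e^\epsilon-1\leq(e-1)\epsilon$ in a slightly different order). No substantive difference.
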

\begin{proof}
\begin{align*}
    &\sum_x |p(x) - \tilde{p}(x)| = \sum_x |e^{-H(x)}/Z - e^{-\tilde{H}(x)}/\widetilde{Z}| \\
    &\quad \leq \sum_x |e^{-H(x)}/Z-e^{-\tilde{H}(x)}/Z|  + \sum_x|e^{-\tilde{H}(x)}/Z - e^{\tilde{H}(x)}/\widetilde{Z}| \\
    &\quad \leq \frac{1}{Z} \sum_x e^{-H(x)}|1-e^{-[\tilde{H}(x)-H(x)]}|  + \sum_x \frac{e^{-\tilde{H}(x)}}{\widetilde{Z}} \biggl| \frac{\widetilde{Z}}{Z} - 1\biggr| \\
    &\quad \leq \max_x |1-e^{-[\tilde{H}(x) - H(x)]}| + | \widetilde{Z}/Z - 1| \numberthis
\end{align*}
Furthermore, $| \widetilde{Z}/Z - 1| = |\widetilde{Z}-Z|/Z = |\sum_x e^{-\tilde{H}(x)} - e^{-H(x)}|/Z = |\sum_x e^{-H(x)} (e^{-[\tilde{H}(x) - H(x)]} - 1) |/Z \leq \max_x |1-e^{-[\tilde{H}(x) - H(x)]}|$.
Thus we have the bound $\sum_x |p(x) - \tilde{p}(x)| \leq 2 \max_x |1-e^{-[\tilde{H}(x) - H(x)]}|$. 
As $|1-e^{-y}|\leq (e-1)|y|$ for $|y| \leq 1$, we have the bound $2(e-1)\epsilon$.
\end{proof}

\begin{corollary}
For a probability distribution $p(x)= e^{-\sum_y J(y) \mathbf{Z}_y(x)}$ and $0 < \epsilon < 1$, we define 
\begin{align}
    J_{t}(y) = \begin{cases}
    J(y) & \text{ if } |J(y)| \geq \epsilon/2^N \\
    0 & \text{ otherwise.} \\
    \end{cases}
\end{align}
Then the probability distribution $p_t(x) \propto e^{f_t(x)}$ for $f_t(x) = W^{-1}\{J_t(x)\}$ satisfies $||p_t(x)-p(x)||_1 \leq 2(e-1)\epsilon$
\end{corollary}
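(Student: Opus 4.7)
The plan is to deduce the corollary as a direct consequence of the preceding proposition, by exhibiting an approximating Hamiltonian whose sup-norm error is controlled by the truncation threshold $\epsilon/2^N$. To set up the reduction, I would define $H(x) := \sum_y J(y)\mathbf{Z}_y(x)$ and $\tilde H(x) := \sum_y J_t(y)\mathbf{Z}_y(x)$, so that $p(x) \propto e^{-H(x)}$ by hypothesis, and $p_t(x) \propto e^{-\tilde H(x)}$ after identifying $f_t(x)$ with $-\tilde H(x)$ via the Walsh--Hadamard representation built into the statement.

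The key analytic step is then to bound the pointwise difference uniformly in $x$. Using the triangle inequality together with the identity $|\mathbf{Z}_y(x)| = 1$, the error decomposes as $|H(x) - \tilde H(x)| = \bigl|\sum_{y:\, |J(y)| < \epsilon/2^N} J(y)\mathbf{Z}_y(x)\bigr| \leq \sum_{y:\, |J(y)| < \epsilon/2^N} |J(y)|$, and since at most $2^N$ terms survive in that sum and each is strictly smaller than $\epsilon/2^N$, this is bounded by $2^N \cdot (\epsilon/2^N) = \epsilon$. Hence $\max_x |H(x) - \tilde H(x)| \leq \epsilon$, which is precisely the hypothesis of the proposition.

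Invoking the proposition with the pair $(H, \tilde H)$ just constructed immediately yields $\|p - p_t\|_1 \leq 2(e-1)\epsilon$, completing the argument. The only place where care is needed is in the bookkeeping for the $W$ versus $W^{-1}$ convention (recall that $W^{-1}$ carries the $1/2^N$ factor in the normalization used earlier in the text), so that the coefficients $J_t(y)$ appearing inside $f_t$ are genuinely the same truncated quantities that were compared to $J(y)$ in the bound. Beyond this notational check there is no real obstacle: the corollary is essentially the proposition specialized to the natural Walsh--Hadamard basis, and the ``main step'' is the elementary observation that truncating at threshold $\epsilon/2^N$ over $2^N$ coefficients produces at most an $\epsilon$ error in the $\ell_\infty$ norm of the log-probabilities.
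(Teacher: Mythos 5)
Your proof is correct and follows essentially the same route as the paper's: bound $\max_x|H(x)-\tilde H(x)|$ by summing the at most $2^N$ truncated coefficients, each below $\epsilon/2^N$, to get a uniform error of $\epsilon$, then invoke the preceding proposition. Your remark on the $W$ versus $W^{-1}$ normalization is a fair observation about the paper's notation, but it does not affect the argument.
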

\begin{proof}
$\max_x |f_t(x) - f(x)| \leq \max_x |\sum_y [J_t(y) - J(y)](-1)^{x \cdot y}| \leq \max_x \sum_y |J_t(y) - J(y)||(-1)^{x \cdot y}| \leq \epsilon/2^n \sum_y |(-1)^{x \cdot y}| = \epsilon$.
\end{proof}

\begin{figure}[h]
    \centering
    \includegraphics[width=0.75\textwidth]{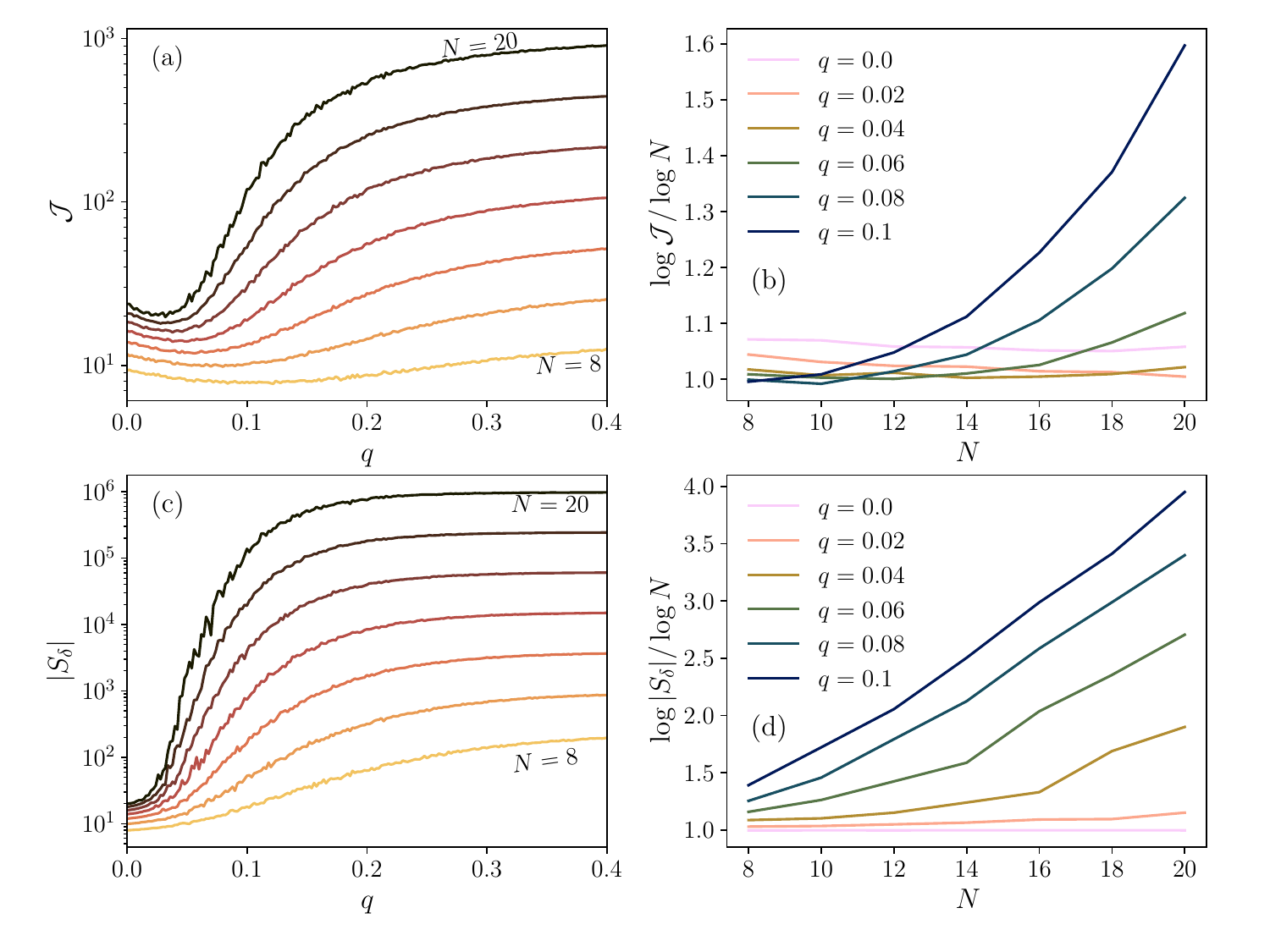}
    \caption{(a) The sum of the Hamiltonian coefficients $\mathcal{J}$ studied in the main text and (b) its scaling behavior.
    The remaining number of terms of the parent Hamiltonian $|S_\delta|$ for fixed target error $\epsilon=10^{-3}$ (c) as a function of $q$ and (d) as a function of $N$ with a normalization to observe the scaling behavior. From (b) and (d), we see that both measures increase super-polynomially with $N$ when $q \gtrsim 0.04$. }
    \label{fig:ham_scaling}
\end{figure}

One can also prove the inverse under a multiplicative approximation error.
\begin{proposition}
For given $p(x) > 0$, consider $\tilde{p}(x)$ that approximates $p(x)$ as $(1/c) p(x) < \tilde{p}(x) < c p(x)$ for $c>1$, then $|J(y) - \tilde{J}(y)| < \log c$ for all $y \neq 0$.
\end{proposition}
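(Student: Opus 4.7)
\medskip

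\noindent\textbf{Proof proposal.} The plan is to reduce the claim to a direct estimate on the Walsh--Hadamard coefficients of $\log p$ and $\log \tilde{p}$, exploiting the fact that for $y \neq 0$ the constant (normalization) parts of $-H$ and $-\tilde{H}$ drop out under the inverse Walsh--Hadamard transform.

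First I would recall from the preceding paragraph that the classical parent Hamiltonian coefficients can be written as
\begin{align}
J(y) &= \frac{1}{2^N} \sum_x \log p(x)\,(-1)^{x \cdot y}, \qquad y \neq 0, \\
\tilde{J}(y) &= \frac{1}{2^N} \sum_x \log \tilde{p}(x)\,(-1)^{x \cdot y}, \qquad y \neq 0,
\end{align}
where the normalization constants $-\log Z$ and $-\log \tilde{Z}$ contribute only to the $y=0$ coefficient because $\sum_x (-1)^{x \cdot y} = 0$ for any $y \neq 0$. In particular, the difference can be written purely in terms of the pointwise log-ratio,
\begin{equation}
J(y) - \tilde{J}(y) = \frac{1}{2^N} \sum_x \log\!\Bigl(\frac{p(x)}{\tilde{p}(x)}\Bigr)\,(-1)^{x \cdot y}.
\end{equation}

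Next I would translate the multiplicative approximation hypothesis $(1/c)\,p(x) < \tilde{p}(x) < c\, p(x)$ into the pointwise bound $|\log(p(x)/\tilde{p}(x))| < \log c$ for every $x$. Applying the triangle inequality together with $|(-1)^{x \cdot y}| = 1$ then yields
\begin{equation}
|J(y) - \tilde{J}(y)| \leq \frac{1}{2^N} \sum_x \Bigl|\log\!\frac{p(x)}{\tilde{p}(x)}\Bigr| < \frac{1}{2^N} \sum_x \log c = \log c,
\end{equation}
which is the claimed bound for all $y \neq 0$.

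There is no real obstacle here; the only subtlety to highlight is the cancellation of the normalization constants, which is precisely why the statement restricts to $y \neq 0$ (the $y=0$ coefficient would pick up $\log(\tilde{Z}/Z)$ and would need to be bounded separately). The strictness of the inequality is inherited directly from the strictness of the hypothesis on $\tilde{p}/p$.
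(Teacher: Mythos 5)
Your proof is correct and follows essentially the same route as the paper's (one-line) proof: the paper simply cites the inverse transformation $J(y\neq 0) = (1/2^N)\sum_x \log p(x)(-1)^{x\cdot y}$, and you have filled in exactly the intended details — the cancellation of the normalization constants for $y\neq 0$, the pointwise bound $|\log(p/\tilde p)|<\log c$, and the triangle inequality.
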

\begin{proof}
This follows from the inverse transformation $J(y \neq 0) = (1/2^N)\sum_x \log p(x) (-1)^{x \cdot y}$.
\end{proof}

However, multiplicative approximations are not practical in most cases. 
As reduction to the additive error is not straightforward,
we instead introduce a simple numerical technique:
For given $H(x)$ and $J(y)=W\{-H(x)\}$, we consider $p_\delta(x) \propto \exp[-W^{-1}\{J_\delta(y)\}]$ where $J_\delta (y)$ is a term after truncation given as
\begin{align}
    J_\delta (y) = \begin{cases}
        0 & \text{ if } |J(y)| < \delta \\
        J(y) & \text{otherwise}.
    \end{cases}
\end{align}
We then numerically find a maximal $\delta$ which satisfies $||p(x)-p_\delta(x)||_1 < \epsilon$ for a given target error $\epsilon$.
In practice, we use the following algorithm: First, we divide the range $[\min_k J(y), \max_k J(y)]$ evenly on a log scale $\{\delta_1,\delta_2,\cdots,\delta_n\}$. For each $\delta_k$, we compute $||p_{\delta}(x) - p(x)||_1$ and find a range $[\delta_i, \delta_{i+1}]$ such that $||p_{\delta_i}(x) - p(x)||_1 < \epsilon < ||p_{\delta_{i+1}}(x) - p(x)||_1$.  
We then binary search between $[\delta_i, \delta_{i+1}]$ to find the target $\delta$.

\subsection{Other signatures of phase transitions in classical parent Hamiltonians}~\label{app:phase_transition_signatures}
In this subsection, we give additional data supporting phase transitions in the learnability of the distribution based on our Hamiltonian measure.
First, we revisit $\mathcal{J}$ studied in the main text in Fig.~\ref{fig:ham_scaling}(a) and (b), which show this measure as a function of $q$ and its scaling behavior with $N$ for different $q$, respectively. Especially, we see that $\widetilde{J} = \log \mathcal{J} / \log N$ diverges as $N$ increases in Fig.~\ref{fig:ham_scaling}(b), which implies $\mathcal{J}$ increases super-polynomially.

We next study the number of remaining terms of the parent Hamiltonian obtained from the algorithm described in the previous section. 
After finding $\delta$ for a given target error $\epsilon$,
we compute the number of remaining terms of the Hamiltonian, which is given by $|S_\delta|$ where $S_\delta = \{y \neq 0| J_\delta (y) \neq 0\}$. 

For target error $\epsilon = 10^{-3}$, we plot $|S_\delta|$ in Fig.~\ref{fig:ham_scaling}(c) and (d).
Similar to the sum of the Hamiltonian $\sum_{y \neq 0} |J(y)|$, Fig.~\ref{fig:ham_scaling}(d) shows that the $|S_\delta|$ scales super-polynomially even when $q \lesssim 1.0 < q_{c_0}$. 
To be specific, $\log|S_\delta|/\log(N)$ increases faster than linear when $q=0.08$ and $q=0.10$, which is not expected if $|S_\delta|=O(N^\alpha)$ for some $\alpha>0$. 
In addition, Fig.~\ref{fig:ham_scaling}(c) shows a plateus for $q > q_{c_0}$, which we consider a property of the scrambled phase.

We additionally study each term of the parent Hamiltonian $J(y)$ in Fig.~\ref{fig:ham_weights}. Instead of listing all $J(y)$ (whose domain is $y\in{0,\cdots,2^N-1}$), we use an aggregated version of it given by $\sum_{w(y)=k} |J(y)|$ where $w(y)$ is the Hamming weight of $y$ ($\sum_{i=0}^{N-1} y_i$).
We plot these quantities obtained for $N=20$ in Fig.~\ref{fig:ham_weights}, which show three different behaviors depending on $q$. First, when $q$ is small, we see that it has only a single peak.
Next, for $0.02 \lesssim q \lesssim 0.06$, we observe that a slope is developed for small $k$.
The shape becomes a usual binomial distribution as $q$ further increases. As constant $|J(y)|$ yields $\sum_{w(y)=k} |J(y)| \propto \sum_{w(y)=k} 1 = {N \choose k}$, we regard this as a signature that $|J(y)|$ becomes uniform.
Moreover, the plot also shows that the magnitude of the peak increases as $q$ increases in this phase.

\begin{figure}
    \centering
    \includegraphics[width=0.9\textwidth]{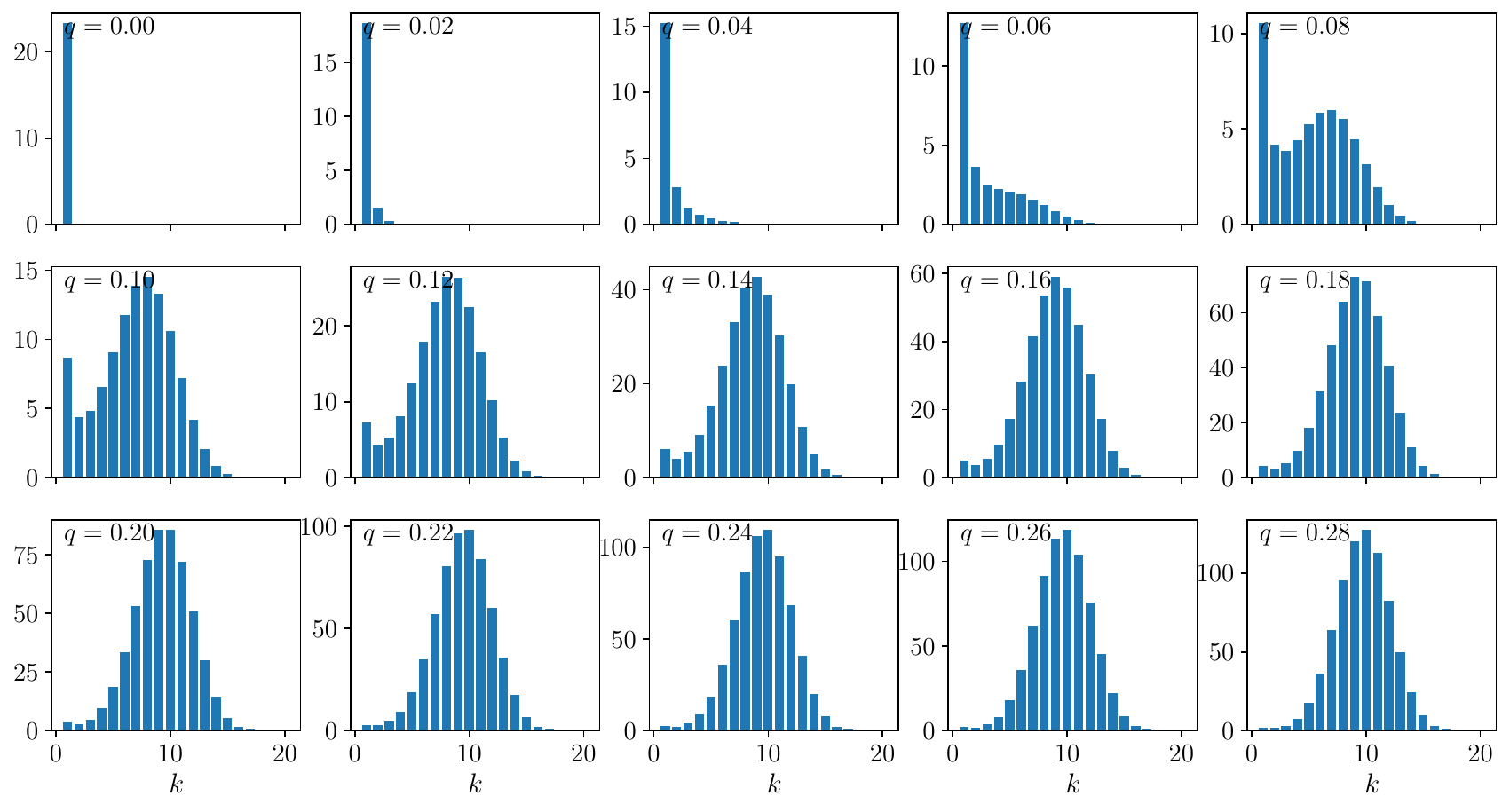}
    \caption{Averaged weights of the classical parent Hamiltonian for different $q \in [0.0, 0.02, \cdots, 0.28]$. For each value $k$ of the $x$-axis, we plot $\sum_{w(y)=k} |J(y)|$ where $w(y)$ is the Hamming weight of $y$. As we discussed in Sec.~\ref{app:classical_ham}, we see a single peak at $N-1$ when $q=0.0$. We also see that the plot resembles the binomial distribution when $q$ is large enough ($q \gtrsim 0.14$), which is a signature that the Hamiltonian coefficients become uniform $|J(y)| \approx C$ (see text for details). In between, we see that another structured behavior where both sides of the distribution grow.}
    \label{fig:ham_weights}
\end{figure}

\section{Details of machine learning setup}~\label{app:ml_setup}

We describe details of our machine learning setup in this Section. 

\subsection{Modelling the distribution with the energy based model}
The energy based model uses a neural network of a single real-number output as the energy function, i.e. $p_{W}(x) = e^{f_W(x)}/Z$ where $p_W(x)$ is the probability the model describes, $f_W(x)$ is the output of the network, and $Z = \sum_{x} f_{W}(x)$ is the partition function. Here $W$ is a one-dimensional vector of all parameters of the network.
One can train this model using a gradient-descent type of algorithm by minimizing the cross entropy $L[p_W(x)] = -\sum_x p(x) \log p_{W}(x) = C + H(p)$ where $p(x)$ is the target distribution and $H(p) = -\sum_x p(x)\log p(x)$ is the Shannon entropy of $p(x)$. 
Note also that we use $p(x)$ as the reference distribution for the KL divergence here, which is opposite to the KL divergence we have shown in Fig.~4 in the main text (where we have used $p_W(x)$ as the reference distribution).
Evaluating $L[p_W(x)]$ is not straightforward for the energy-based model in practice, as the partition function $Z$ is hard to be evaluated. Instead, it is more feasible to estimate the gradient of the loss function, which is given by
\begin{align}
    -g &= -\nabla_W L[p_W(x)] = -\nabla_W \Bigl[ -\sum_x p(x) \log p_{W}(x) \Bigr] = \sum_x p(x) \nabla_W [\log p_W(x) - \log Z] \\
    &= \sum_x p(x) \nabla_W f_W(x) - \sum_{x'} \frac{[\nabla_W f_W(x')] e^{f_W(x')}}{Z} \\
    &\approx \bigl\langle \nabla_W f_W(x) \bigr\rangle_{x \sim p(x)} -\bigl\langle \nabla_W f_W(x) \bigr\rangle_{x \sim p_W(x)},
\end{align}
where $\braket{\cdot}_{x \sim p(x)}$ is an average over samples from $p(x)$. Then we can train the network using usual first order optimization methods such as Adam~\cite{kingma2014adam}.

\begin{figure}[t]
    \centering
    \includegraphics[width=0.9 \textwidth]{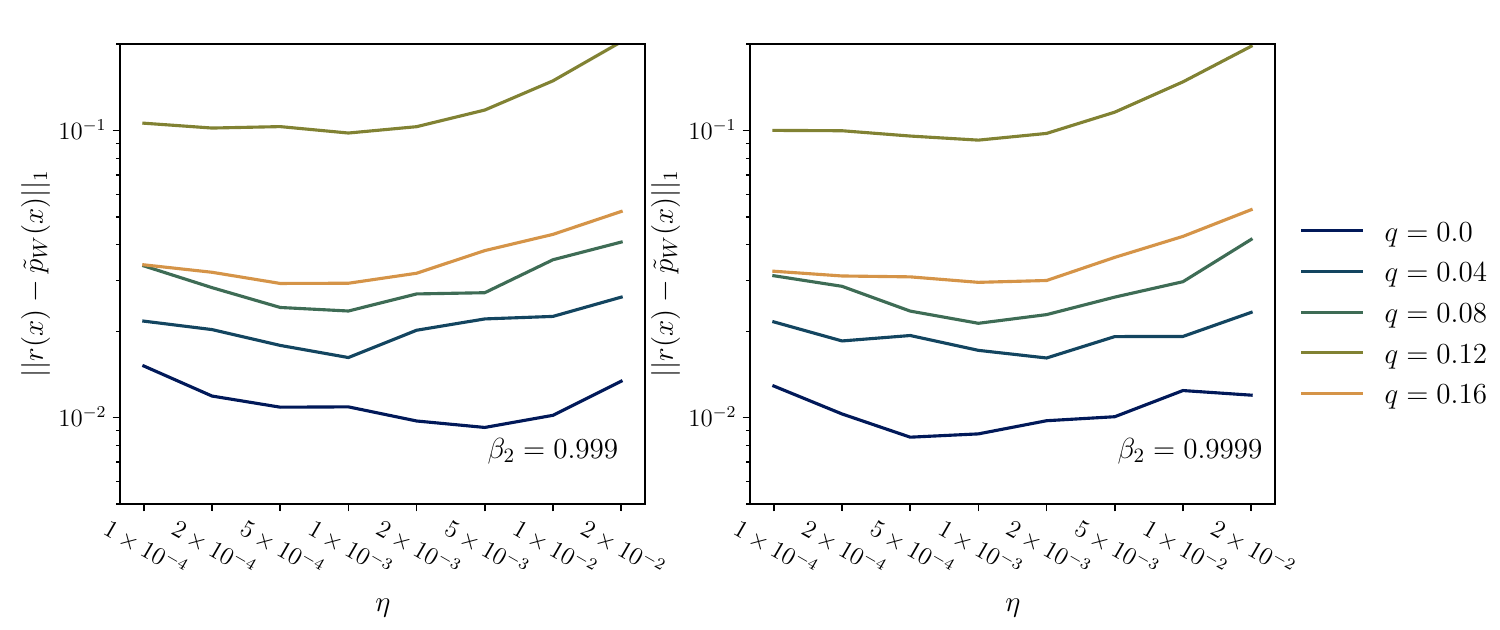}
    \caption{Converged $L1$ distance between the converged distribution from an energy based model $p_W(x)$ and the distribution from the IQP circuit $p(x)$. 
    For each $q$, we randomly pick a single IQP instance and trained the energy-based model for different hyperparameters $\eta$ (learning rate) and $\beta_2$ (average rate for the Fisher information matrix). }
    \label{fig:learn_iqp_hyper_adam}
\end{figure}

In general, one samples from the distribution $p_W(x)$ with Monte-Carlo methods such as Metropolis-Hastings algorithm. In our simulation, however, we compute $p_W(x)$ exactly by computing $f_W(x)$ over all possible $x$ and exactly sample from the computed distribution. This is feasible up to $N=20$ qubits (the maximum number of qubits we simulated) by maintaining all $2^N$ values. Thus we always generate unbiased samples regardless of the shape of the distribution (e.g. even for a spin-glass type distribution that Monte-Carlo methods may fail).

For each epoch, we sample $x=x_1,\cdots,x_S$ from $p(x)$ as well as $p_W(x)$ where $M$ is the min-batch size and estimate $F$ and $g$ from them.
As we generate samples from $p(x)$ for each epoch, usually bigger sample size $S$ performs better (see also e.g. Ref.~\cite{pfau2020ab,park2022expressive}). In our simulation, we choose $S=1024$ which is the maximum size (among the perfect power of $2$) we can simulate with the given computational resource.

\subsection{Neural network architecture}
As our target distribution (the output distribution of a circuit) does not have a particular symmetry, we use a network with fully connected layers for $f_W(x)$.
We use a fully-connected network with two hidden layers each has $M=\alpha N$ hidden units. 
The visible and the first hidden layers have bias terms whereas the last layer does not. Thus the number of parameters from the first, second, and the last layers are $N\times M+M$, $M\times M + M$, and $M$, respectively. 
We used $\alpha=30$ for the results in the main text, and the network has total $60,240$ parameters for $N=8$ and $373,800$ for $N=20$.

\subsection{Hyperparameters}
The hyperparameters for Adam optimizer are the learning rate ($\eta$), the momentum for the gradient ($\beta_1$), and the momentum for component-wise gradient norm ($\beta_2$). We use the default value for $\beta_1=0.9$ and grid search $\eta$ for $\beta_2=[0.999, 0.9999]$.
For a single circuit realization for given $q$, the converged $L1$ distance between the target distribution $p(x)$ and $p_W(x)$ is shown in Fig.~\ref{fig:learn_iqp_hyper_adam}.
Based on the results, we choose $\eta = 10^{-3}$ and $\beta_2 = 0.999$, which is well-performing for all tested values of $q$, to obtain results in the main text.
We still expect similar quantitative results from other choices of hyperparamters.

\section{Results from the natural gradient descent}\label{app:nat_grad}

\begin{figure}
    \centering
    \includegraphics[width=0.45\linewidth]{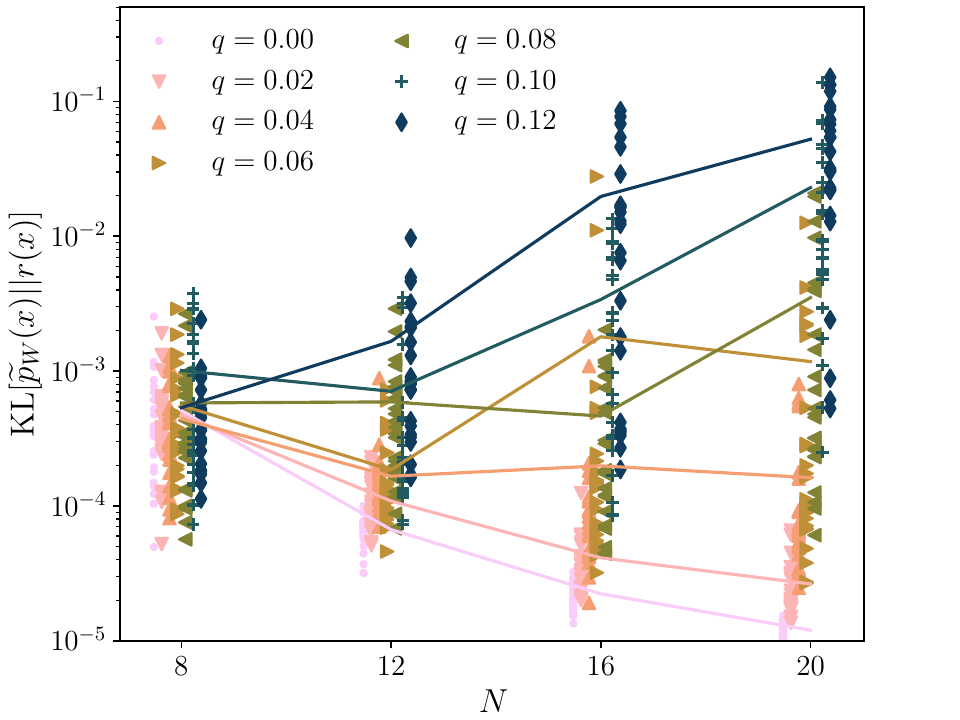}
    \caption{The KL divergence between the output distribution of the IQP circuits $p(x)$ and that from the energy based model after training $\tilde{p}_W(x)$ with the NGD. 
    Other settings are the same as Fig.~4 in the main text.}
    \label{fig:ml_converged_kl_ngd}
\end{figure}

\subsection{Natural gradient descent optimizer}
The natural gradient descent (NGD)~\cite{amari1998natural} is a second order method for optimizing a neural network.
Albeit it requires a significantly larger amount of computational resources, it may give better convergence properties for some dataset~\cite{park2020geometry,pfau2020ab}.
For the natural gradient descent, we additionally compute the Fisher information matrix given as $F = (F_{ij})$ where
\begin{align}
F_{ij} &= \sum_x p_W(x) [\partial_{W_i}\log p_W(x)][\partial_{W_j}\log p_W(x)] - \sum_x p(x) [\partial_{W_i}\log p_W(x)] \sum_x p(x) [\partial_{W_j}\log p_W(x)] \\
       &\approx \bigl\langle \partial_{W_i} f_{W}(x) \partial_{W_j} f_{W}(x) \bigr\rangle_{x \sim p_W(x)} - \bigl\langle \partial_{W_i} f_{W}(x) \bigr\rangle_{x \sim p_W(x)}\bigl\langle \partial_{W_j} f_{W}(x) \bigr\rangle_{x \sim p_W(x)}.
\end{align}
Then the parameter of the network is updated as $W_{t+1} = W_t - \eta F^{-1}_t g_t$ where $W_t$ is the weight at epoch $t$, $F_t$ and $g_t$ are moving averages of the Fisher matrix $F$ and the gradient $g$. We use $F_t = F_{t-1} + \beta_2 F$ and $g_t = g_{t-1} + \beta_1 g$ where $F$ and $g$ are Fisher matrix and the gradient estimated at each epoch, respectively.
We note that we solve the linear equation $F_t v = g_t$ at each step, which requires $O(|W|^{2-3})$ number of operations (depending on the condition number of $F$) where $|W|$ is the number of parameters in the network. 
As the first order optimization algorithms (e.g. Adam) only require $O(|W|)$ operations for each step, we see that the NGD requires a significantly larger amount of computational resources.

\subsection{Neural network architecture}
Because of the computational overhead mentioned above, we use a relatively smaller network (compared to the case of Adam).
Our network has one hidden layers which has $10N$ hidden units where $N$ is the number of qubits. Only the visible layer has a bias term and the total number of the parameters of the network is $10N^2 + 11N$. We note that the number of parameters in this case is still an order of magnitude larger than that of the IQP circuit itself.

\subsection{Results}
As for the Adam optimizer, we plot the converged distance between $p(x)$ and $\tilde{p}_W(x)$ for $24$ different circuit realizations after hyperparameter optimization in Fig.~\ref{fig:ml_converged_kl_ngd}.
We see that the errors increase exponentially with $N$ for $0.06 \lesssim q < q_{c_0}$ as in the results from Adam. 
We also observe that the converged KL divergence decreases with $N$ for $q=0.00$ and $q=0.02$, which is different from what we observed with Adam optimizer. As the converged values are indeed smaller (the averaged converged KL divergence is $< 2\times 10^{-5}$ when $N=20$ and $q=0.0$) than when we used Adam (which was $>10^{-4}$) albeit its smaller network size, we conclude that the distributions when $q < q_{c_1}$ are better handled by the NGD than Adam.

\end{document}